\documentclass{article}

\usepackage{arxiv}

\usepackage[utf8]{inputenc} 
\usepackage[T1]{fontenc}    
\usepackage{hyperref}       
\usepackage{url}            
\usepackage{booktabs}       
\usepackage{amsfonts}       
\usepackage{nicefrac}       
\usepackage{microtype}      
\usepackage{lipsum}		
\usepackage{graphicx}
\usepackage{natbib}
\usepackage{doi}
\usepackage{amsmath,amssymb,amsthm}
 
\usepackage{mathtools}
\usepackage[flushleft]{threeparttable}
\usepackage{longtable}
\usepackage{diagbox}
\usepackage{chngpage}
\usepackage{algorithm}
\usepackage{algpseudocode}

\usepackage{pdflscape}
\usepackage{float}
\usepackage{multirow}
\usepackage{subcaption}
\usepackage{hhline}
\usepackage{xcolor}
\usepackage[english]{babel}

\newtheorem{theorem}{Theorem}
\newcommand{\E}{\mathbb{E}}
\newcommand{\Var}{\mathrm{Var}}
\newtheorem{assumption}{Assumption}
\newcommand{\indep}{\mathrel{\perp\!\!\!\perp}}
\newtheorem{remark}{Remark}
\newtheorem{definition}{Definition}
\newtheorem{lemma}{Lemma}

\newenvironment{proofof}[1]
{\renewcommand{\proofname}{Proof of #1}\begin{proof}}
{\end{proof}}

\title{Estimating heterogeneous treatment effects with survival outcomes via a deep survival learner}

\author{Yuming Sun\\
	Department of Mathematics\\
	William \& Mary, Williamsburg\\
	\texttt{ysun30@wm.edu} \\
	\And
	Jian Kang\\
    Department of Biostatistics\\
	University of Michigan, Ann Arbor\\
	\texttt{jiankang@umich.edu}\\
	\And
	Yi Li\\
	Department of Biostatistics\\
	University of Michigan, Ann Arbor\\
	\texttt{yili@umich.edu}
}

\date{}

\hypersetup{
pdftitle={A template for the arxiv style},
pdfsubject={q-bio.NC, q-bio.QM},
pdfauthor={David S.~Hippocampus, Elias D.~Striatum},
pdfkeywords={First keyword, Second keyword, More},
}

\begin{document}
\maketitle

\begin{abstract}
Estimating heterogeneous treatment effects in survival settings is complicated by right censoring as well as the time-varying nature of the estimand. While the conditional average treatment effect (CATE) provides a natural target, most existing approaches focus on a single prespecified time point and do not account for temporal trajectory, leading to instability in estimation. We propose a deep survival learner (DSL) for estimating heterogeneous treatment effects with right-censored outcomes. The method is based on a doubly robust pseudo-outcome whose conditional expectation identifies time-specific CATEs under standard assumptions. This construction remains unbiased if either the outcome model or the treatment assignment model is correctly specified, when properly accounting for censoring. To estimate CATEs over a clinically relevant time spectrum, DSL employs a multi-output deep neural network with shared representations, enabling joint estimation of treatment effect trajectories. From a theoretical perspective, we derive error bounds for both pointwise and joint estimation over time. We show that joint estimation can leverage temporal structure to control estimation error without incurring much additional approximation cost under smoothness conditions, leading to improved stability relative to separate estimation.
Cross-fitting is incorporated to reduce overfitting and mitigate bias arising from flexible nuisance estimation. Simulation studies demonstrate favorable finite-sample performance, particularly under nuisance model misspecification. Applied to the Boston Lung Cancer Study, DSL reveals  heterogeneity in the effects of perioperative chemotherapy across patient characteristics and over time.
\end{abstract}

\keywords{Causal Inference \and Conditional Average Treatment Effect \and Double Robustness \and  Deep Neural Network}

\section{Introduction}
\label{intro}
    A substantial literature has examined heterogeneity in treatment effects, often summarized by the conditional average treatment effect (CATE) within the counterfactual framework~\citep{imbens2016causal}. Modeling CATE as a function of pre-treatment covariates enables systematic assessment of variation across patient subgroups, a central goal in precision medicine, where large-scale studies have documented marked heterogeneity in patient characteristics, disease awareness, and clinical outcomes across populations~\citep{dharmarajan2017ckd}.
In survival settings, this task is further complicated by censoring and the time-to-event nature of the outcome. Moreover, biomedical outcomes frequently arise from complex, interdependent processes involving treatment, exposure, and patient characteristics, as illustrated in HPV-related cancers~\citep{haddad2008hpv16} and lung cancer~\citep{zhai2022spirometry}. These features challenge the adequacy of traditional parametric and semiparametric approaches.
{   Recently, metalearner frameworks, including the M-learner, R-learner, and X-learner, have been developed. These approaches leverage the flexibility of modern machine learning methods by decomposing CATE estimation into a sequence of regression problems that can be solved with off-the-shelf algorithms.   For instance, the M-learner constructs pseudo-outcomes based on inverse probability weighting.  The pseudo-outcome formulation is  appealing, as it reduces the problem of estimating CATEs to a regression 
  on covariates~\citep{horvitz1952generalization}.    The X-learner~\citep{kunzel2019metalearners} fits separate outcome models in the treated and control groups and uses them to impute individual treatment effects, while the R-learner~\citep{nie2021quasi} formulates the problem as a residual-on-residual regression. More broadly, doubly robust learners have been developed to improve statistical efficiency and to mitigate sensitivity to nuisance model misspecification~\citep{kennedy2020towards}.}

{ Building on developments for continuous outcomes, recent work has extended metalearner approaches to survival settings. For example, \cite{yang2025doubly} propose a doubly robust learner that combines inverse probability of censoring weighting with nonparametric failure-time Bayesian additive regression trees to construct pseudo-outcomes for CATE estimation. Similarly, \cite{bo2025evaluating} develop alternative pseudo-outcome constructions and extend a range of metalearners, including the X-learner, M-learner, and R-learner, to right-censored data.   Beyond the metalearner framework, tree-based and forest-based methods have also been adapted to estimate heterogeneous treatment effects with survival outcomes. \cite{cui2023estimating} introduce causal survival forests, which extend causal forests to accommodate censoring, while \cite{dandl2024heterogeneous} develop related model-based forest approaches. In a different direction, \cite{hu2024new} propose a random-intercept accelerated failure time model combined with Bayesian additive regression trees to estimate differences in log survival time between treatment groups. Finally, \cite{curth2021survite} introduce SurvITE, a neural network approach that models treatment-specific discrete-time hazard functions to capture treatment effect heterogeneity.}

{ Several methodological challenges remain. Many pseudo-outcome based approaches rely on complete-case information and do not fully account for censoring, which can lead to bias or loss of efficiency in CATE estimation~\citep{xu2023treatment}.   Survival outcomes are inherently time-dependent, and treatment effects may evolve over follow-up due to delayed onset, attenuation, or time-varying biological mechanisms. In many applications, the primary scientific target is therefore the trajectory of heterogeneous treatment effects over a clinically relevant time interval, rather than a contrast at a single prespecified time point.   Most existing methods, however, estimate CATE at a fixed time and treat different time points as unrelated, without leveraging the underlying temporal structure. This can result in unstable estimation and reduced statistical efficiency.}
 
   To address these limitations, we propose a deep survival learner (DSL) for estimating heterogeneous treatment effects with right-censored survival outcomes. We adopt a pseudo-outcome formulation that enables the use of flexible function approximators, such as deep neural networks, to learn its conditional expectation. Our approach distinguishes from existing methods by unifying causal identification, temporal modeling, and robust estimation within a single framework.  First, we introduce a doubly robust pseudo-outcome for survival settings whose conditional expectation identifies the CATE on survival probabilities at each time point. The construction, when properly accounting for right censoring,  is unbiased  if either the outcome model or the treatment assignment model is correctly specified.
Second, we cast CATE estimation as a joint learning problem over time and develop a multi-output neural network to estimate the entire treatment effect trajectory simultaneously. By sharing representations across time points, the method exploits temporal structure in the CATE, enabling information borrowing and leading to smoother and more stable estimates than separate models fit at each time point.
From a theoretical perspective, we  derive error bounds for both pointwise estimation and joint estimation over a time grid. We show that, under mild smoothness conditions on the treatment effect trajectory, joint estimation can control estimation error without incurring much additional approximation cost, thereby improving stability relative to separate estimation. Third, we incorporate cross-fitting to separate nuisance estimation from target estimation, which mitigates overfitting and enhances robustness when more flexible methods are used.

 The remainder of the paper is organized as follows. Section~\ref{methods} introduces the data structure, notation, and target estimand, and presents the proposed DNN estimator. Section~\ref{theory} establishes the error rates at both a fixed time point and over a grid of time points. Section~\ref{simulation} reports simulations  under varying degrees of nuisance model misspecification and different choices of time grids. Section~\ref{BLCS} applies the proposed method to the Boston Lung Cancer Study to examine heterogeneous effects of perioperative chemotherapy among patients with early-stage non-small cell lung cancer.

\section{Methods}
\label{methods}
\subsection{Data Structure,  Estimand and Pseudo-outcome Construction}
\label{notation}
 Let $T$ and $\mathcal C$ denote the event time and censoring time, respectively. With right censoring, we observe $U=\min(T,\mathcal C)$ and the event indicator $\Delta=\mathbb{I}(T\le \mathcal C)$. 
We observe i.i.d. data
\(
\mathcal D=\{(U_i,\Delta_i,X_i,W_i): i=1,\dots,N\},
\)
where $X_i\in\mathbb{R}^d$ denotes pre-treatment covariates and $W_i\in\{0,1\}$ denotes treatment assignment.   Under the potential outcomes framework~\cite{imbens2016causal}, let $T_i^w$ denote the potential survival time under treatment $w\in\{0,1\}$. The observed event time satisfies
\(
T_i = W_i T_i^1 + (1-W_i) T_i^0.
\)
Our target estimand is the conditional average treatment effect (CATE) on the survival probability at  any time $t$ in the support of $U$, defined as
\begin{equation}\label{eq:CATE}
\tau_0(x;t)
=
P(T_i^1>t \mid X_i=x) - P(T_i^0>t \mid X_i=x).
\end{equation}
 
 Since the potential survival times $T_i^0$ and $T_i^1$ in \eqref{eq:CATE} cannot be observed simultaneously (even in the absence of censoring), direct estimation of  CATE  is infeasible. We propose to construct a pseudo-outcome whose  expectation conditional  on $X_i$ equals the target estimand:
\begin{equation}\label{eq:pseudo}
\begin{aligned}
\varphi_i(t) &= \pi_i^1(t)-\pi_i^0(t),  \,\,  {\rm with} \\
\pi_i^w(t)
&=
S^w(X_i;t)
+
\frac{\mathbb I(W_i=w)}{e^w(X_i)}
\left\{
\frac{Y_i(t)}{G^w(X_i;t)}
-
S^w(X_i;t)
\right\},
\qquad w\in\{0,1\},
\end{aligned}
\end{equation}
where $S^w(x;t)$, $G^w(x;t)$, and $e^w(x)$ denote the nuisance functions corresponding to the conditional survival function $P(T_i>t\mid X_i=x,W_i=w)$, the conditional censoring survival function $P(\mathcal C_i>t\mid X_i=x,W_i=w)$, and the propensity score $P(W_i=w\mid X_i=x)$, respectively.  Here, for a fixed time point $t$,  the at-risk indicator
\(
Y_i(t)=\mathbb{I}(U_i>t).
\)
In practice, these nuisance functions are unknown and must be estimated from the data, and may be subject to model misspecification.

Despite this, the pseudo-outcome in \eqref{eq:pseudo} remains unbiased  under partial misspecification due to its double robustness. Building on this property, we estimate the pseudo-outcome by evaluating it over a prespecified grid of time points, reflecting that the target CATE is indexed by time and typically of interest over an interval rather than at a single time point. We then model its conditional expectation given baseline covariates using a deep neural network (DNN) introduced in the next subsection and obtain a flexible estimator of the time-varying CATE function.

\subsection{Deep Neural Network: Architecture and Implementation}
\label{DNN_architecture}

 Let $0 \le t_{\min} < t_{\max}$, where $[t_{\min}, t_{\max}]$ is a compact subset of the support of $U$. The time window $[t_{\min}, t_{\max}]$ is specified to balance clinical relevance and data support.  For example, in our motivating lung cancer study, \(t_{\min}=0\) represents the time of diagnosis, whereas \(t_{\max}\) is chosen to ensure that a sufficient number of patients remain at risk at that time for stable estimation. We estimate $\tau_0(x;t)$ on a prespecified grid of $J$ time points
\[
t_{\min}=t_1 < t_2 < \cdots < t_J=t_{\max}.
\]
This leads to the problem of learning a vector-valued function
\[
x \in \mathbb{R}^p \mapsto \tau_0(x) = \big(\tau_0(x;t_1),\ldots,\tau_0(x;t_J)\big)^\top \in \mathbb{R}^J,
\]
which we approximate using a deep neural network $f:\mathbb{R}^p \to \mathbb{R}^J$. {We consider a fully connected feedforward network with $L$ hidden layers. Let $p_0 = p$ denote the input dimension and $p_{L+1} = J$ denote the output dimension. For each hidden layer $\ell = 1,\ldots,L$, let $p_\ell$ denote its width.  Define the layer-wise transformations recursively. Set
\(
h^{(0)}(x) = x \in \mathbb{R}^{p_0}.
\)
For each hidden layer $\ell = 1,\ldots,L$, define
\[
h^{(\ell)}(x)
=
\sigma\!\big( W_\ell h^{(\ell-1)}(x) + b_\ell \big)
\in \mathbb{R}^{p_\ell},
\]
where
\(
W_\ell \in \mathbb{R}^{p_\ell \times p_{\ell-1}},
\,
b_\ell \in \mathbb{R}^{p_\ell},
\)
and $\sigma(u)=\max(u,0)$ is the RELU activation function, applied componentwise. The output layer is given by
\[
f(x)
=
W_{L+1} h^{(L)}(x) + b_{L+1}
\in \mathbb{R}^{p_{L+1}} = \mathbb{R}^J,
\]
where
\(
W_{L+1} \in \mathbb{R}^{J \times p_L},
\,
b_{L+1} \in \mathbb{R}^{J}.
\) Let $\mathcal{F}_n$ denote the class of such networks with the following structural constraints: 
\begin{itemize}
    \item[](\textbf{Depth}) $L \le L_n$;
\item[](\textbf{Width}) $\max_{1 \le \ell \le L} p_\ell \le p_n$;
  \item[](\textbf{Sparsity}) the total number of nonzero parameters satisfies
    \(
    \sum_{\ell=1}^{L+1} \|W_\ell\|_0 + \sum_{\ell=1}^{L+1} \|b_\ell\|_0 \le s_n;
    \)
  \item[](\textbf{Magnitude constraint}) all entries of $\{W_\ell, b_\ell\}$ are bounded in absolute value by $B_n$.
\end{itemize}}

{ With these structural constraints, regularization is implicitly achieved through sparsity and boundedness of the DNN class $\mathcal{F}_n$, which control model complexity and limit the magnitude of fitted values. These properties help stabilize estimation. } Another key feature of the DNN class $\mathcal{F}_n$ is its multi-output structure across time points, with hidden layers shared to capture temporal coherence in the CATE. Specifically, an $f(x) \in \mathcal{F}_n$ jointly estimates
\(
\big(\tau_0(x;t_1),\ldots,\tau_0(x;t_J)\big),
\)
thereby leveraging smoothness and dependence across time. The parameters $\{W_\ell, b_\ell\}_{\ell=1}^{L+1}$ are estimated by minimizing the empirical squared loss averaged over subjects and time points (Algorithm~\ref{alg:DSL}), with hyperparameters selected via cross-validation.

\subsection{Estimation and Implementation}
\label{est_and_implement}

Let
\(
\hat{\boldsymbol{\varphi}}_i
=
\bigl(\hat{\varphi}_i(t_1),\ldots,\hat{\varphi}_i(t_J)\bigr)^\top
\)
denote the vector of estimated pseudo-outcomes for subject \(i\)
on a prespecified grid of \(J\) time points,
\(
t_{\min}=t_1 < t_2 < \cdots < t_J=t_{\max}.
\)
 To reduce overfitting and permit flexible nuisance estimation, we employ
\(K\)-fold cross-fitting throughout. Let the observed data be
\(
\mathcal D=\{(U_i,\Delta_i,X_i,W_i)\}_{i=1}^N,
\)
 and  randomly partition \(\mathcal D\) into \(K\) approximately equal-sized, disjoint folds
\(
\mathcal D_1,\ldots,\mathcal D_K.
\) For each fold \(k\), let \(\mathcal D_{-k}=\mathcal D\setminus \mathcal D_k\) denote the corresponding training sample, and use \(\mathcal D_k\) as the held-out sample. 

{Within each training sample \(\mathcal D_{-k}\), we estimate the nuisance functions needed for pseudo-outcome construction. Specifically, for each treatment group \(w\in\{0,1\}\), we estimate 
\(
S^w(x;t),
\) and 
\(
G^w(x;t)
\)  using only subjects in \(\mathcal D_{-k}\) with \(W_i=w\). 
These nuisance components may be estimated using Cox-type models, or, when greater flexibility is desired, using nonparametric survival learners such as random survival forests. Also, the propensity score
\(
e^w(x) 
\)
is estimated on \(\mathcal D_{-k}\) using a classification method, such as logistic regression or a more flexible tree-based ensemble learner. } 

These estimators are then evaluated only on observations in the held-out fold \(\mathcal D_k\). Thus, for each subject \(i\in\mathcal D_k\), the pseudo-outcome vector
\(
\hat{\boldsymbol{\varphi}}_i
\)
is constructed using nuisance estimates obtained from data that do not include subject \(i\).  Repeating this procedure over all $K$ folds yields cross-fitted pseudo-outcomes for every subject in the sample.

We then estimate the CATE function $\tau(x;t_j)$, $j=1,\ldots,J$, by solving
\begin{equation}
\label{eq:main_estimator}
\hat\tau
=
\arg\min_{\tau \in \mathcal F_n}
\frac{1}{NJ}\sum_{i=1}^N
\bigl\|
\hat{\boldsymbol{\varphi}}_i - \tau(X_i)
\bigr\|_2^2.
\end{equation}
where $ \tau(X_i) = \bigl(\tau(X_i; t_1), \ldots, \tau(X_i; t_J) \bigr)^\top$. That is, we find the optimizer $\hat{\tau}(\cdot,\cdot)$ within the deep neural network class $\mathcal F_n$ with shared hidden layers and $J$ output nodes, each corresponding to a time point on the grid. The resulting procedure, referred to as the Deep Survival Learner (DSL), is summarized in Algorithm~\ref{alg:DSL}.

Regardless of the specific choice of nuisance estimators, our theoretical analysis shows that the error of the DNN estimator decomposes into nuisance estimation error as well as  approximation error and statistical error arising from fitting the DNN.

\begin{algorithm}[!htbp]
\caption{Deep Survival Learner (DSL)}
\label{alg:DSL}
\small
\begin{algorithmic}[1]
\Require Observed data $\mathcal D=\{(U_i,\Delta_i,X_i,W_i)\}_{i=1}^N$,
time grid $\{t_1,\ldots,t_J\}$, number of folds $K$
\Ensure Estimated CATE  $\hat\tau(\cdot;t)$ on $\{t_1,\ldots,t_J\}$

\State Randomly partition $\mathcal D$ into $K$ disjoint folds
$\mathcal D_1,\ldots,\mathcal D_K$

\For{$k=1,\ldots,K$}
    \State Let $\mathcal D_{-k}=\mathcal D\setminus \mathcal D_k$
    \State Estimate nuisance functions
    $\hat S^w(x;t)$, $\hat G^w(x;t)$, and $\hat e^w(x)$ using $\mathcal D_{-k}$
    \For{each $i\in\mathcal D_k$}
        \State Construct estimated pseudo-outcomes
        \(
        \hat{\boldsymbol{\varphi}}_i
        =
        \bigl(\hat{\varphi}_i(t_1),\ldots,\hat{\varphi}_i(t_J)\bigr)^\top
        \)
         using the fitted nuisance models
    \EndFor
\EndFor

\State Fit a multi-output DNN $\hat\tau \in \mathcal F_n$ by minimizing the objective in \eqref{eq:main_estimator}
\[
\frac{1}{NJ}\sum_{i=1}^N
\bigl\|
\hat{\boldsymbol{\varphi}}_i - \tau(X_i)
\bigr\|_2^2
\]

\end{algorithmic}
\end{algorithm}

\section{Theoretical Properties} \label{theory}
{\em Notation:} 
 For $w \in \{0,1\}$ and $t \in [t_{\min}, t_{\max}]$, define
\[
S_0^w(x;t) = P(T_i > t \mid X_i = x, W_i = w), 
e_0^w(x) = P(W_i = w \mid X_i = x), \]
\(G_0^w(x;t) = P(\mathcal{C}_i > t \mid X_i = x, W_i = w).
\)
Let $\hat S^w(x;t)$, $\hat e^w(x)$, and $\hat G^w(x;t)$ denote the corresponding estimators.  
For subject $i$, let
\(
O_i = (U_i, \Delta_i, X_i, W_i)
\)
denote the observed data. For any measurable function $g$, define the empirical measure by
\(
\mathbb{P}_n g = \frac{1}{n}\sum_{i=1}^n g(O_i).
\)
Further, for a random function $\hat g(\cdot)$ constructed independently of $O_i$, the quantities
\(
\E\{\hat g(O_i)\}
\quad\text{and}\quad
\E\{\hat g(O_i)\mid X_i\}
\)
are understood as expectations taken with respect to $O_i$ only, or
\(
\E\{\hat g(O_i)\}
=
\int \hat g(o)\, dP_{O}(o), \) and 
\(
\E\{\hat g(O_i)|X_i=x\}
=
\int \hat g(o)\, dP_{O\mid X}(o|x).
\)
For a vector \(x=(x_1,\ldots,x_d)^\top\in\mathbb R^d\), define its \(\ell_q\)-norm by
\(
\|x\|_q
=
\left(\sum_{j=1}^d |x_j|^q\right)^{1/q},
\, 1\le q<\infty,
\)
and
\(
\|x\|_\infty
=
\max_{1\le j\le d}|x_j|.
\)
 Let \(P_X\) denote the probability distribution of the covariate vector \(X\) on \(\mathcal X\). For a measurable function \(f:\mathcal X\to\mathbb R\), define
\(
\|f\|_{L_q(P_X)}
=
\left(\int_{\mathcal X} |f(x)|^q\, dP_X(x)\right)^{1/q},
\, 1\le q<\infty,
\)
and
\(
\|f\|_{L_\infty(P_X)}
=
\sup_{x\in\mathcal X} |f(x)|.
\)

 We impose the following assumptions to establish the statistical properties of the proposed estimator.

\begin{assumption}[Consistency]\label{assum_consistency}
For each individual,
\(
T_i = T_i^{W_i}.
\)
\end{assumption}

\begin{assumption}[Unconfoundedness]\label{assum_unconfoundedness}
\(
\{T_i^0, T_i^1\} \indep W_i \mid X_i.
\)
\end{assumption}

\begin{assumption}[Overlap]\label{assum_overlap}
 Assume the support of $X$ is $\mathcal X=[0,1]^d$, and $P_X$ has a density bounded away from $0$ and $\infty$ on $\mathcal X$.
There exists a constant $c > 0$ such that \(
e_0^w(x) \ge c
\), for all $x \in \mathcal X$ and $w \in \{0,1\}$.
\end{assumption}

\begin{assumption}[Noninformative censoring]\label{assum_indep_censoring}
For each $w \in \{0,1\}$, we assume that
\(
\mathcal C_i \;\indep\; T_i^w \mid (X_i, W_i = w).
\)
\end{assumption}

\begin{assumption}[Censoring positivity]\label{assum_positivity}
There exists a constant $c > 0$ such that \(
G_0^w(x; t) \ge c
\), for all $(x,t)$ in the support of $(X_i,T_i)$ and all $w \in \{0,1\}$.
 
\end{assumption}

\begin{assumption}[Uniform convergence rate]\label{assum_uni_consistency} For a fixed $t \in [t_{\min}, t_{\max}]$, assume the estimates, corresponding to $S_0^w(x;t), G_0^w(x;t) $ and $e_0^w(x)$,
satisfy
    \[
    \sup_x |\hat S^w(x;t)-S_0^w(x;t)|=O_p(r_n^{(S)}), \qquad
    \sup_x |\hat G^w(x;t)-G_0^w(x;t)|=O_p(r_n^{(G)}),
    \]
    \[
    \sup_x |\hat e^w(x)-e_0^w(x)|=O_p(r_n^{(e)}),
    \]
    for some positive sequences $r_n^{(S)},
r_n^{(G)}, r_n^{(e)}$ that converge to 0.
    
\end{assumption}

Assumptions~\ref{assum_consistency}--\ref{assum_overlap} are standard in causal inference and ensure identification of the potential outcome distributions \citep{rosenbaum1983propensity, imbens2015causal}. Assumptions~\ref{assum_indep_censoring}--\ref{assum_positivity} are commonly imposed in survival analysis to guarantee identifiability of the survival and censoring mechanisms \citep{andersen1982cox, tsiatis2006semiparametric}. Assumption~\ref{assum_uni_consistency} requires uniform convergence of the estimated nuisance functions, which is standard in semiparametric and machine learning-based inference and determines the convergence rate of the proposed estimator \citep{vanderVaart1998, chernozhukov2018dml}.

\begin{theorem}[Identification]\label{thm:ID}
Under Assumptions~\ref{assum_consistency}-\ref{assum_positivity}, if all nuisance functions in \eqref{eq:pseudo} equal their true counterparts, it holds that
\[
\E\{\varphi_i(t)\mid X_i=x\}=\tau_0(x;t).
\]
\end{theorem}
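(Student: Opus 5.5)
By linearity of conditional expectation it suffices to show $\E\{\pi_i^w(t)\mid X_i=x\}=P(T_i^w>t\mid X_i=x)$ for each $w\in\{0,1\}$, with all nuisances at their true values $S_0^w$, $G_0^w$, $e_0^w$. Subtracting the two identities then gives $\E\{\varphi_i(t)\mid X_i=x\}=\tau_0(x;t)$ as in \eqref{eq:CATE}. Fix $w$ and $t$. Since $S_0^w(X_i;t)$ is a function of $X_i$ alone, I write
\[
\E\{\pi_i^w(t)\mid X_i=x\}=S_0^w(x;t)+\frac{1}{e_0^w(x)}\,\E\Bigl[\mathbb I(W_i=w)\Bigl\{\frac{Y_i(t)}{G_0^w(X_i;t)}-S_0^w(X_i;t)\Bigr\}\Bigm| X_i=x\Bigr].
\]
Division by $e_0^w$ and $G_0^w$ is legitimate because of the lower bounds in Assumptions~\ref{assum_overlap} and~\ref{assum_positivity}.

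Next I condition on $(X_i,W_i)$ by the tower property. The inner expectation equals $P(W_i=w\mid X_i=x)$ times
\[
\frac{\E\{Y_i(t)\mid X_i=x,W_i=w\}}{G_0^w(x;t)}-S_0^w(x;t).
\]
The central step is to compute $\E\{Y_i(t)\mid X_i=x,W_i=w\}$. Because $Y_i(t)=\mathbb I(U_i>t)=\mathbb I(T_i>t)\,\mathbb I(\mathcal C_i>t)$, I argue as follows. On $\{W_i=w\}$ we have $T_i=T_i^w$ by Assumption~\ref{assum_consistency}. Assumption~\ref{assum_indep_censoring} then makes $\mathcal C_i$ and $T_i^w$ conditionally independent given $(X_i,W_i=w)$. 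Hence the conditional expectation factorizes as
\[
P(T_i>t\mid X_i=x,W_i=w)\,P(\mathcal C_i>t\mid X_i=x,W_i=w)=S_0^w(x;t)\,G_0^w(x;t).
\]
Dividing by $G_0^w(x;t)$ gives exactly $S_0^w(x;t)$, so the bracket vanishes. The augmentation term therefore has conditional mean zero, and $\E\{\pi_i^w(t)\mid X_i=x\}=S_0^w(x;t)$. This is also the step I would check most carefully: it requires $t$ to lie where $G_0^w(x;t)>0$, which Assumption~\ref{assum_positivity} guarantees for $t\in[t_{\min},t_{\max}]$ in the support.

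Finally I link $S_0^w$ to the potential-outcome survival. We have
\[
S_0^w(x;t)=P(T_i>t\mid X_i=x,W_i=w)=P(T_i^w>t\mid X_i=x,W_i=w)
\]
by consistency. By unconfoundedness (Assumption~\ref{assum_unconfoundedness}) this equals $P(T_i^w>t\mid X_i=x)$. The conditioning event $\{X_i=x,W_i=w\}$ has positive conditional probability by the overlap condition $e_0^w(x)\ge c$, so the conditional probabilities are well defined. Combining the results for $w=1$ and $w=0$ completes the argument.

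I do not expect a serious obstacle, since the proof is a direct conditional-expectation computation. The only point needing care is the factorization of $\E\{Y_i(t)\mid X_i,W_i=w\}$. That step uses the noninformative-censoring assumption stated for the potential time $T_i^w$, and it must be transferred to the observed $T_i$ on $\{W_i=w\}$ via consistency before the product formula applies.
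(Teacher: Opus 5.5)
Your proposal is correct and follows essentially the same route as the paper's proof. Both reduce to a single arm, condition on $(X_i,W_i)$, and factorize $\E\{Y_i(t)\mid X_i=x,W_i=w\}=S_0^w(x;t)G_0^w(x;t)$ using consistency plus noninformative censoring, so that the augmentation term has conditional mean zero; both then identify $S_0^w(x;t)$ with $P(T_i^w>t\mid X_i=x)$ via consistency and unconfoundedness.
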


 Theorem~\ref{thm:ID}  establishes identification of $\tau_0(x;t)$ via the conditional expectation of $\varphi_i(t)$ given $X_i$, thereby reducing the problem of estimating heterogeneous treatment effects under censoring to a conditional mean regression problem.
 
\begin{theorem}[Double robustness]\label{thm:DR}
Under Assumptions~\ref{assum_consistency}-\ref{assum_positivity}, suppose the censoring model is correctly specified, i.e.,
\(
G^w(x;t)=P(\mathcal C_i>t\mid X_i=x,W_i=w).
\)
If either
 the survival model is correctly specified, i.e.,
\(
S^w(x;t)=P(T_i>t\mid X_i=x,W_i=w),
\)
or
 the propensity score model is correctly specified, i.e.,
\(
e^w(x)=P(W_i=w\mid X_i=x),
\)
it holds that
\[
\mathbb E\{\varphi_i(t)\mid X_i=x\}=\tau_0(x;t).
\]
\end{theorem}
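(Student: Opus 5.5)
The plan is to compute $\E\{\pi_i^w(t)\mid X_i=x\}$ for each $w\in\{0,1\}$ separately and show that it equals $P(T_i^w>t\mid X_i=x)$ under either specification. Subtracting the $w=0$ result from the $w=1$ result then gives the claim. Throughout, $G^w=G_0^w$ is held fixed, while $S^w$ and $e^w$ are arbitrary deterministic functions, with $e^w$ bounded away from zero so that the expression is well defined.

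The first step is an inverse-probability-of-censoring-weighting identity. On the event $\{W_i=w\}$ we have $Y_i(t)=\mathbb I(T_i>t)\,\mathbb I(\mathcal C_i>t)$, and by Assumption~\ref{assum_consistency} also $T_i=T_i^w$. By Assumption~\ref{assum_indep_censoring},
\[
\E\{\mathbb I(W_i=w)Y_i(t)\mid X_i=x\}=e_0^w(x)\,S_0^w(x;t)\,G_0^w(x;t).
\]
Assumption~\ref{assum_positivity} lets us divide by $G_0^w(x;t)$, which yields
\[
\E\left\{\frac{\mathbb I(W_i=w)Y_i(t)}{G_0^w(X_i;t)}\,\Big|\,X_i=x\right\}=e_0^w(x)S_0^w(x;t).
\]
By Assumptions~\ref{assum_consistency}--\ref{assum_unconfoundedness}, $S_0^w(x;t)=P(T_i^w>t\mid X_i=x,W_i=w)=P(T_i^w>t\mid X_i=x)$. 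This is the same calculation used for Theorem~\ref{thm:ID}, and I would reuse it directly.

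Substituting this identity into \eqref{eq:pseudo} gives
\[
\E\{\pi_i^w(t)\mid X_i=x\}=S^w(x;t)+\frac{e_0^w(x)}{e^w(x)}\{S_0^w(x;t)-S^w(x;t)\}.
\]
Subtracting $S_0^w(x;t)$ from both sides produces the product form
\[
\E\{\pi_i^w(t)\mid X_i=x\}-S_0^w(x;t)=\{S_0^w(x;t)-S^w(x;t)\}\left\{\frac{e_0^w(x)}{e^w(x)}-1\right\}.
\]
The right-hand side vanishes if $S^w=S_0^w$, because the first factor is zero, or if $e^w=e_0^w$, because the second factor is zero. Combining this with the identification of $S_0^w$ as the counterfactual survival probability gives $\E\{\varphi_i(t)\mid X_i=x\}=\tau_0(x;t)$.

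The main point requiring care is the first step. We must justify that $\mathbb I(W_i=w)$ can be pulled out and that censoring factorizes. Assumption~\ref{assum_indep_censoring} is stated conditionally on $(X_i,W_i=w)$, so the conditional expectation should first be computed given $(X_i,W_i)$ and only then averaged over $W_i$ given $X_i$. Positivity guarantees that the denominators are nonzero on the relevant support.
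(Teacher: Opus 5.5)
Your proposal is correct and follows essentially the same route as the paper. Both arguments condition on $(X_i,W_i)$ and use noninformative censoring to get $\E\{Y_i(t)/G_0^w(x;t)\mid X_i=x,W_i=w\}=S_0^w(x;t)$; the residual term then vanishes either because $S^w=S_0^w$ or because $\E\{\mathbb I(W_i=w)/e_0^w(x)\mid X_i=x\}=1$. The only difference is organizational: the paper treats the two cases separately, whereas your single identity $\{S_0^w-S^w\}\{e_0^w/e^w-1\}$ covers both at once and is the population analogue of the first term in the paper's bias decomposition for Theorem~\ref{thm:bias_full}.
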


 The double robustness of the proposed pseudo-outcome, established in Theorem~\ref{thm:DR}, is especially appealing in observational survival studies, where it is often difficult to correctly specify all nuisance components.

 The consistency of the final estimator further relies on the approximation and estimation properties of the DNN.
We establish consistency of $\hat{\tau}(x;t)$ for $\tau_0(x;t)$ at a fixed time point $t$ (i.e., $J=1$); the extension to $J>1$ is analogous. Recall that
\(
\tau_0(x;t)
=
\mathbb{E}\big[\varphi_i^0(t)\mid X_i = x\big],
\)
where $\varphi_i^0(t)$ is the oracle pseudo-outcome and $\hat{\varphi}_i(t)$ is its cross-fitted counterpart (Section~\ref{est_and_implement}). The DNN estimator is obtained via
\[
\hat\tau(\cdot;t) = \arg\min_{f\in\mathcal F_n}
\mathbb P_n\big[\{\hat\varphi_i(t)-f(X_i)\}^2\big],
\]
where $\mathcal F_n$ is the DNN class. 
We begin with the decomposition
\[
\hat{\tau}(x;t) - \tau_0(x;t)
=
\underbrace{\hat{\tau}(x;t) - \mathbb{E}\big[\hat{\varphi}_i(t)\mid X_i = x\big]}_{\text{regression error}}
+
\underbrace{\mathbb{E}\big[\hat{\varphi}_i(t) - \varphi_i^0(t)\mid X_i = x\big]}_{\text{nuisance-induced error}}.
\]
The first term reflects the error from nonparametric regression, while the second term captures the impact of estimating nuisance functions. We bound the second term using the following results. 

\begin{theorem}[Bias decomposition with estimated nuisance functions]\label{thm:bias_full}
     Suppose Assumptions~\ref{assum_overlap},~\ref{assum_positivity}, and~\ref{assum_uni_consistency} hold. Let $\hat{\varphi}_i(t)$ and $\varphi_i^0(t)$ denote the estimated and population pseudo-outcomes, respectively, obtained by plugging the cross-fitted and true nuisance functions into \eqref{eq:pseudo}.
        Then,  it holds that
    \begin{align}
        \mathbb E\big[\hat\varphi_i(t)-\varphi_i^0(t)\mid X_i=x\big]
        &=
        \sum_{w=0}^1 (-1)^{w+1}
        \Bigg[
        \frac{\hat e^w(x)-e_0^w(x)}{\hat e^w(x)}
        \big\{\hat S^w(x;t)-S_0^w(x;t)\big\}
        \notag\\
        &
        +
        \frac{e_0^w(x)}{\hat e^w(x)}
        \frac{S_0^w(x;t)}{\hat G^w(x;t)}
        \big\{G_0^w(x;t)-\hat G^w(x;t)\big\}
        \Bigg].
        \label{eq:full_bias_decomp}
    \end{align}
    Consequently, it follows that
    \[
    \sup_x
    \left|
    \mathbb E\big[\hat\varphi_i(t)-\varphi_i^0(t)\mid X_i=x\big]
    \right|
    =
    O_p\!\left(r_n^{(e)}r_n^{(S)} + r_n^{(G)}\right).
    \]
    
    In particular, if \(\hat G^w(x;t)=G_0^w(x;t)\), then we have 
    \[
    \sup_x
    \left|
    \mathbb E\big[\hat\varphi_i(t)-\varphi_i^0(t)\mid X_i=x\big]
    \right|
    =
    O_p\!\left(r_n^{(e)}r_n^{(S)}\right).
    \]
    
\end{theorem}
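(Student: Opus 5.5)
Because of cross-fitting, the fitted nuisances $\hat S^w,\hat G^w,\hat e^w$ used for subject $i$ are built from data independent of $O_i$. So, following the paper's convention, I treat them as fixed functions and take the conditional expectation over $O_i$ only, given $X_i=x$. The pseudo-outcome is linear in $\pi^1_i-\pi^0_i$, so it suffices to compute $\E[\hat\pi_i^w(t)-\pi_i^{w,0}(t)\mid X_i=x]$ for each $w$. The sign pattern $(-1)^{w+1}$ then just records that $\pi^1$ enters with $+$ and $\pi^0$ with $-$.

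\textbf{Step 1: conditional means.} First I compute $\E[\mathbb I(W_i=w)Y_i(t)\mid X_i=x]$. Conditioning on $W_i=w$ gives $e_0^w(x)\,P(U_i>t\mid X_i=x,W_i=w)$. By Assumption~\ref{assum_indep_censoring} this equals $e_0^w(x)S_0^w(x;t)G_0^w(x;t)$. Consistency (Assumption~\ref{assum_consistency}) identifies $T_i$ with $T_i^w$ on $\{W_i=w\}$, and conditional independence of $T_i^w$ and $\mathcal C_i$ factors the joint survival. Also $\E[\mathbb I(W_i=w)\mid X_i=x]=e_0^w(x)$.

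\textbf{Step 2: plug in and subtract.} Plugging in gives
\[
\E[\hat\pi_i^w(t)\mid X_i=x]=\hat S^w+\frac{e_0^w}{\hat e^w}\Big\{\frac{S_0^wG_0^w}{\hat G^w}-\hat S^w\Big\},
\]
with arguments $(x;t)$ suppressed. The same computation with true nuisances gives $\E[\pi_i^{w,0}(t)\mid X_i=x]=S_0^w$, which is the content of Theorem~\ref{thm:ID}. Subtracting, I write
\[
\frac{S_0^wG_0^w}{\hat G^w}=S_0^w+\frac{S_0^w}{\hat G^w}(G_0^w-\hat G^w),
\]
and collect terms:
\[
\hat S^w-S_0^w+\frac{e_0^w}{\hat e^w}(S_0^w-\hat S^w)=\frac{\hat e^w-e_0^w}{\hat e^w}(\hat S^w-S_0^w).
\]
Together with the remaining term $\frac{e_0^w}{\hat e^w}\frac{S_0^w}{\hat G^w}(G_0^w-\hat G^w)$, this reproduces the bracket in \eqref{eq:full_bias_decomp}. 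This algebra is routine.

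\textbf{Step 3: rate bound.} I take the supremum over $x$ and need $\hat e^w$ and $\hat G^w$ bounded away from zero uniformly, with probability tending to one. This is the only delicate point. Assumptions~\ref{assum_overlap} and~\ref{assum_positivity} give $e_0^w,G_0^w\ge c$, and Assumption~\ref{assum_uni_consistency} gives uniform convergence. Hence on an event of probability $\to1$, $\inf_x\hat e^w\ge c/2$ and $\inf_x\hat G^w(x;t)\ge c/2$. Alternatively, one can assume the estimators are truncated at $c/2$. Using $e_0^w\le1$ and $S_0^w\le1$, the first term is bounded by $(2/c)\,r_n^{(e)}r_n^{(S)}$ and the second by $(4/c^2)\,r_n^{(G)}$, both in $O_p$ terms. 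Summing over $w\in\{0,1\}$ gives $O_p(r_n^{(e)}r_n^{(S)}+r_n^{(G)})$. When $\hat G^w=G_0^w$ the second term vanishes identically, which gives the final claim.

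One subtlety: Assumption~\ref{assum_positivity} is stated on the support of $(X_i,T_i)$. I apply it at the fixed $t\in[t_{\min},t_{\max}]$, which lies within the support of $U$ by construction.
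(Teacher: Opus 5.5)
Your proposal is correct and follows the paper's proof essentially step for step. You hold the cross-fitted nuisances fixed, compute $\E\{\hat\pi_i^w(t)\mid X_i=x\}$ via the factorization $S_0^wG_0^w$, add and subtract $\frac{e_0^w}{\hat e^w}S_0^w$ to isolate the product term and the censoring term, and bound uniformly on the event $\inf_x\hat e^w\ge c/2$, $\inf_x\hat G^w(x;t)\ge c/2$, whose probability tends to one — which is exactly the paper's argument, including its reliance on Assumptions~\ref{assum_consistency} and~\ref{assum_indep_censoring} for the factorization even though the theorem lists only Assumptions~\ref{assum_overlap},~\ref{assum_positivity}, and~\ref{assum_uni_consistency}.
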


 We impose smoothness conditions on the CATE function in \(x\) and additional regularity conditions on the cross-fitted pseudo-outcomes and the network class, which are required for the consistency and error rate results at a fixed time point.
 \begin{definition}[H\"older class]\label{def:holder}
Let \(\mathcal X=[0,1]^d\) and let \(\beta>0\). Write
\[
\beta=q+\alpha,
\qquad
q=\lfloor \beta\rfloor,
\qquad
\alpha\in(0,1],
\]
where $q=\lfloor \beta\rfloor$ is the integer part of $\beta$. Let \(\nu=(\nu_1,\ldots,\nu_d)\),  where each \(\nu_j\) is a nonnegative integer. 
For \(x=(x_1,\ldots,x_d)\), define
\(
D_x^\nu
=
\frac{\partial^{\|\nu\|_1}}{\partial x_1^{\nu_1}\cdots \partial x_d^{\nu_d}}.
\)
For \(M>0\), define \(\mathcal H^\beta(M)\) as the class of functions
\(
f:\mathcal X\to\mathbb R
\)
such that, for all mixed partial derivatives \(D_x^\nu f(x)\) with \(\|\nu\|_1\le q\), these derivatives exist and are uniformly bounded by \(M\), that is,
\[
\max_{\|\nu\|_1\le q}\sup_{x\in\mathcal X}|D_x^\nu f(x)|\le M,
\]
and moreover,
\[
\sup_{x\neq y}
\frac{|D_x^\nu f(x)-D_x^\nu f(y)|}{\|x-y\|_2^\alpha}
\le M,
\qquad \|\nu\|_1=q.
\]
\end{definition}
\begin{assumption}[Bounded second moments]\label{assum_moment} The cross-fitted 
$\hat\varphi_i$ satisfies
 \(
        \sup_{ 1\le i \le n} \mathbb E\big[\hat\varphi_i^2(t)\big] < \infty.
        \)
\end{assumption}

\begin{assumption}[Network approximation capacity] \label{assum_capacity}
The network class $\mathcal F_n$ is rich such that there exists $f_n^\star\in\mathcal F_n$ satisfying
    \[
    \|f_n^\star-\tau_0(\cdot;t)\|_{L_\infty(P_X)}
    \le C n^{-\beta/(2\beta+d)}.
    \]
 \end{assumption}

\begin{assumption}[Uniform convergence over network] \label{assum_emp}
 The empirical squared-loss process satisfies
    \[
    \sup_{f\in\mathcal F_n}
    \left|
    \mathbb P_n\big[\{\hat\varphi_i(t)-f(X_i)\}^2\big]
    -
    \mathbb E\big[\{\hat\varphi_i(t)-f(X_i)\}^2\big]
    \right|
    =
    O_p(\rho_n),
    \]
    for some $\rho_n\to 0$.  
\end{assumption}

\begin{assumption}[Controlled optimization error]
\label{assum_opt}  The optimization error is controlled in the sense that
    \[
    \mathbb P_n\big[\{\hat\varphi_i(t)-\hat\tau(X_i;t)\}^2\big]
    \le
    \inf_{f\in\mathcal F_n}
    \mathbb P_n\big[\{\hat\varphi_i(t)-f(X_i)\}^2\big]
    +\delta_n^2,
    \]
    where $\delta_n=o_p(1)$.
\end{assumption}

 The H\"older class in Definition~\ref{def:holder} is a standard smoothness condition in nonparametric regression used to characterize approximation error and minimax rates \citep{tsybakov2009introduction,gyorfi2006distribution}. Assumption~\ref{assum_moment} imposes a bounded second moment condition on the pseudo-outcomes, ensuring well-behaved risk and concentration properties \citep{chernozhukov2018double}. Assumption~\ref{assum_capacity} requires that the network class $\mathcal F_n$ can approximate $\tau_0(\cdot;t)$ at the optimal rate, consistent with approximation results for ReLU networks under H\"older smoothness \citep{schmidt2020deep}. Assumption~\ref{assum_emp} ensures uniform convergence of the empirical loss to its population counterpart over $\mathcal F_n$, providing control of the estimation error in empirical risk minimization \citep{van1996weak,bartlett2017spectrally}. Finally, Assumption~\ref{assum_opt} controls the optimization error arising from approximate minimization of the empirical loss \citep{schmidt2020deep}.

\begin{theorem}[Consistency of the constructed estimator at a fixed $t$]\label{thm:consistency_tailored}
    Suppose  Assumptions~\ref{assum_consistency}-- \ref{assum_emp} hold and   that the convergence rates of nuisance functions in Assumption \ref{assum_uni_consistency}  satisfy that 
    \(
    r_n^{(e)}r_n^{(S)}+r_n^{(G)}\to 0.
    \) 
         It follows that
    \[
    \|\hat\tau(\cdot;t)-\tau_0(\cdot;t)\|_{L_2(P_X)}\xrightarrow{p}0.
    \].
\end{theorem}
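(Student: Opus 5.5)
We prove the claim through the standard excess-risk argument for least squares, treating $\bar\tau_n(x):=\mathbb E[\hat\varphi_i(t)\mid X_i=x]$ as the working regression target. Fix $t$ and write $\hat\varphi_i=\hat\varphi_i(t)$. For any $f$, the quantity $R(f):=\mathbb E[\{\hat\varphi_i-f(X_i)\}^2]$ (expectation over $O_i$ only, with the cross-fitted nuisances held fixed) has the Pythagorean decomposition
\[
R(f)=\mathbb E[\{\hat\varphi_i-\bar\tau_n(X_i)\}^2]+\|f-\bar\tau_n\|_{L_2(P_X)}^2 .
\]
The first term is finite by Assumption~\ref{assum_moment}. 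It follows that $\|f-\bar\tau_n\|_{L_2(P_X)}^2=R(f)-R(\bar\tau_n)$. By the triangle inequality,
\[
\|\hat\tau-\tau_0\|_{L_2(P_X)}\le \|\hat\tau-\bar\tau_n\|_{L_2(P_X)}+\|\bar\tau_n-\tau_0\|_{L_2(P_X)} .
\]
For the second term, Theorem~\ref{thm:ID} gives $\tau_0=\mathbb E[\varphi_i^0\mid X_i]$. Theorem~\ref{thm:bias_full} then bounds $\sup_x|\bar\tau_n-\tau_0|$ by $O_p(r_n^{(e)}r_n^{(S)}+r_n^{(G)})=o_p(1)$, and the $L_\infty$ bound dominates the $L_2(P_X)$ norm.

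For the regression error we chain three inequalities:
\[
R(\hat\tau)\le \mathbb P_n[(\hat\varphi_i-\hat\tau)^2]+O_p(\rho_n)\le \mathbb P_n[(\hat\varphi_i-f_n^\star)^2]+\delta_n^2+O_p(\rho_n)\le R(f_n^\star)+\delta_n^2+O_p(\rho_n).
\]
The first and third steps use Assumption~\ref{assum_emp}, and the middle step uses Assumption~\ref{assum_opt} with the infimum bounded by its value at $f_n^\star$ from Assumption~\ref{assum_capacity}. Subtracting $R(\bar\tau_n)$ yields
\[
\|\hat\tau-\bar\tau_n\|_{L_2(P_X)}^2\le \|f_n^\star-\bar\tau_n\|_{L_2(P_X)}^2+\delta_n^2+O_p(\rho_n).
\]
We then split $\|f_n^\star-\bar\tau_n\|_{L_2(P_X)}\le \|f_n^\star-\tau_0\|_{L_\infty}+\sup_x|\tau_0-\bar\tau_n|$. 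The first piece is at most $Cn^{-\beta/(2\beta+d)}$ by Assumption~\ref{assum_capacity}, and the second is $o_p(1)$ by the previous paragraph. Since $\delta_n,\rho_n\to0$, every term on the right vanishes in probability.

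The main obstacle is that $\hat\varphi_i$ depends on nuisance estimates fitted on other folds. As a result, $\bar\tau_n$ is random and the summands in $\mathbb P_n$ are only conditionally independent within a fold. I will handle this by conditioning on the nuisance-training data $\mathcal D_{-k}$. Within each held-out fold, the observations are then i.i.d. and $\hat\varphi_i$ is a fixed function of $O_i$, so the Pythagorean identity and Assumption~\ref{assum_emp} apply conditionally. Averaging over the $K$ folds and passing the $O_p$ statements through the conditioning, where conditional $o_p(1)$ implies unconditional $o_p(1)$ by dominated convergence applied to probabilities, completes the argument.

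A second point to verify is that the cross term in the Pythagorean identity vanishes. This holds because $\mathbb E[\hat\varphi_i-\bar\tau_n(X_i)\mid X_i]=0$ by the definition of $\bar\tau_n$ under the paper's convention for expectations of $\hat g(O_i)$. The fold-wise definitions of $\bar\tau_n$ differ across $k$, so I would either state the result fold by fold or note that all bounds in Theorem~\ref{thm:bias_full} hold uniformly over the finitely many folds.
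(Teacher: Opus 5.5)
Your proposal is correct and follows essentially the same route as the paper: the working target $m_n(x)=\mathbb E[\hat\varphi_i\mid X_i=x]$, the Pythagorean excess-risk identity, the uniform bias bound from Theorem~\ref{thm:bias_full}, and the basic inequality from empirical risk minimization combined with Assumption~\ref{assum_emp}. The differences are minor. You compare against the approximating network $f_n^\star$ from Assumption~\ref{assum_capacity} rather than a population risk minimizer over $\mathcal F_n$, which is what the paper itself does in the proof of Theorem~\ref{thm:overall_rate_holder}. You invoke Assumption~\ref{assum_opt}, which is not among the stated hypotheses but holds trivially with $\delta_n=0$ because $\hat\tau$ is defined as the exact empirical minimizer. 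You also handle the fold-dependence of $m_n$ more carefully than the paper does.
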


\begin{theorem}[Rate for the constructed CATE estimator at a fixed $t$]\label{thm:overall_rate_holder}
Suppose 
    \(
\tau_0(\cdot;t)\in\mathcal H^\beta(M)
    \)
    for some $\beta>0$ and $M>0$, and that 
 Assumptions~\ref{assum_consistency}--\ref{assum_opt} hold.  
It follows that 
\[
\|\hat\tau(\cdot;t)-\tau_0(\cdot;t)\|_{L_2(P_X)}
=
O_p\!\left(
n^{-\beta/(2\beta+d)}
+
\rho_n^{1/2}
+
\delta_n
+
r_n^{(e)}r_n^{(S)}
+
r_n^{(G)}
\right).
\]
In particular, if $\rho_n\lesssim n^{-2\beta/(2\beta+d)}$,  it holds that
\[
\|\hat\tau(\cdot;t)-\tau_0(\cdot;t)\|_{L_2(P_X)}
=
O_p\!\left(
n^{-\beta/(2\beta+d)}
+
\delta_n
+
r_n^{(e)}r_n^{(S)}
+
r_n^{(G)}
\right).
\]
\end{theorem}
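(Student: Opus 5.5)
We follow the standard oracle-inequality argument for least squares, treating the cross-fitted pseudo-outcome as a noisy response. Write $m_n(x)=\E[\hat\varphi_i(t)\mid X_i=x]$ for the conditional mean of the cross-fitted pseudo-outcome; it is computed holding the nuisance estimates fixed, as allowed by the notation convention for functions built independently of $O_i$. Set $b_n(x)=m_n(x)-\tau_0(x;t)$. By the triangle inequality,
\[
\|\hat\tau(\cdot;t)-\tau_0(\cdot;t)\|_{L_2(P_X)}\le \|\hat\tau(\cdot;t)-m_n\|_{L_2(P_X)}+\|b_n\|_{L_2(P_X)}.
\]
The second term is at most $\sup_x|b_n(x)|$. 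Theorem~\ref{thm:ID} gives $\E[\varphi_i^0(t)\mid X_i]=\tau_0(x;t)$, so $b_n$ is exactly the nuisance-induced error, and Theorem~\ref{thm:bias_full} bounds it by $O_p(r_n^{(e)}r_n^{(S)}+r_n^{(G)})$. What remains is the regression error $\|\hat\tau-m_n\|_{L_2(P_X)}$.

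For the regression error, define the population risk $R(f)=\E[\{\hat\varphi_i(t)-f(X_i)\}^2]$ and its empirical counterpart $R_n(f)=\mathbb P_n[\{\hat\varphi_i(t)-f(X_i)\}^2]$. Assumption~\ref{assum_moment} makes $R$ finite. Conditioning on $X_i$, the cross term vanishes, which gives the Pythagorean identity
\[
R(f)-R(m_n)=\|f-m_n\|_{L_2(P_X)}^2 .
\]
Now compare $\hat\tau$ with $f_n^\star$ from Assumption~\ref{assum_capacity}. Three facts chain together:
\begin{itemize}
\item $R(\hat\tau)\le R_n(\hat\tau)+O_p(\rho_n)$, by Assumption~\ref{assum_emp};
\item $R_n(\hat\tau)\le R_n(f_n^\star)+\delta_n^2$, by Assumption~\ref{assum_opt};
\item $R_n(f_n^\star)\le R(f_n^\star)+O_p(\rho_n)$, by Assumption~\ref{assum_emp} again.
\end{itemize}
Combining them and subtracting $R(m_n)$ yields
\[
\|\hat\tau-m_n\|_{L_2}^2\le \|f_n^\star-m_n\|_{L_2}^2+O_p(\rho_n)+\delta_n^2 .
\]
Then $\|f_n^\star-m_n\|_{L_2}\le \|f_n^\star-\tau_0\|_{L_\infty}+\sup_x|b_n(x)|\lesssim n^{-\beta/(2\beta+d)}+O_p(r_n^{(e)}r_n^{(S)}+r_n^{(G)})$. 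Here the Hölder assumption $\tau_0(\cdot;t)\in\mathcal H^\beta(M)$ is what justifies the approximation rate in Assumption~\ref{assum_capacity}, as in Schmidt-Hieber's ReLU approximation results. Taking square roots with $\sqrt{a+b+c}\le\sqrt a+\sqrt b+\sqrt c$ and returning to the first display gives the stated rate. The special case $\rho_n\lesssim n^{-2\beta/(2\beta+d)}$ is immediate, since then $\rho_n^{1/2}\lesssim n^{-\beta/(2\beta+d)}$ is absorbed into the first term.

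The main obstacle is the conditioning argument behind the Pythagorean identity and the use of Assumption~\ref{assum_emp}. Cross-fitting makes $\hat\varphi_i$ depend on other folds, so the summands of $\mathbb P_n$ are not independent. I would handle this fold by fold. Conditional on $\mathcal D_{-k}$, the pseudo-outcomes in $\mathcal D_k$ are i.i.d., and both $m_n$ and $R$ are defined fold-wise. Averaging over the $K$ folds then only changes constants. I take Assumption~\ref{assum_emp} as stated to already encode this uniform control, so no further empirical-process work is needed beyond this bookkeeping.
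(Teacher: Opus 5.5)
Your proposal is correct and follows essentially the same route as the paper's proof. Both go through the triangle inequality via $m_n$, bound the nuisance-induced term by Theorem~\ref{thm:bias_full}, use the identity $R(f)-R(m_n)=\|f-m_n\|_{L_2(P_X)}^2$, and compare $\hat\tau$ with $f_n^\star$ through Assumptions~\ref{assum_capacity}--\ref{assum_opt}. Your fold-by-fold remark on cross-fitting is extra bookkeeping that the paper leaves implicit; strictly, the single regression target is then the average of the fold-specific conditional means, and the same sup-norm bias bound applies to it.
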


\begin{remark} Cross-fitting ensures that the pseudo-outcome $\hat{\varphi}_i(t)$ is constructed using nuisance estimators that are independent of observation $i$. This allows us to treat the estimated nuisance functions as fixed when taking conditional expectations, which is critical for establishing the bias decomposition in Theorem~\ref{thm:bias_full}.
In particular, it justifies the expansion
\(
\mathbb{E}\{\hat{\varphi}_i(t)\mid X_i=x\},
\)
leading to the second-order term $r_n^{(e)} r_n^{(S)}$ arising from the product of estimation errors in the propensity score and outcome model. Without cross-fitting, additional first-order bias terms of order $r_n^{(e)} + r_n^{(S)}$ may appear due to overfitting.
We note that estimation of the censoring model introduces an additional first-order term $r_n^{(G)}$, as reflected in the bias decomposition.
\end{remark}

 To quantify the gain from joint estimation of \(\{\tau_0(x;t_j)\}_{j=1}^J\), we impose additional regularity on the CATE surface in the time dimension. In contrast to the single-time-point setting, where only smoothness in \(x\) is required, joint estimation exploits cross-time dependence by modeling \(\tau_0(x;t)\) as a function that is jointly smooth in \((x,t)\).
We formalize this through a bivariate H\"older class and prove  that it induces a low-complexity representation in the time direction and enables control of the joint approximation error. The following lemma establishes this representation, and the subsequent theorem uses it to characterize the advantage of shared network estimation across time. Without loss of generality, we take
$[t_{\min}, t_{\max}]=[0,1]$ throughout the  development.
\begin{definition}[H\"older smoothness in \((x,t)\)]\label{def:holder_xt}
Let \(\mathcal X=[0,1]^d\) and \(\mathcal T=[0,1]\). For \(\beta_x>0\) and \(\beta_t>0\), write
\[
\beta_x = q_x+\alpha_x,
\qquad
q_x=\lfloor \beta_x\rfloor,
\qquad
\alpha_x\in(0,1],
\]
and
\[
\beta_t = q_t+\alpha_t,
\qquad
q_t=\lfloor \beta_t\rfloor,
\qquad
\alpha_t\in(0,1].
\]
Let \(\nu=(\nu_1,\ldots,\nu_d)\),  where  \(\nu_j\) is a nonnegative integer. For \(x=(x_1,\ldots,x_d)\), define
\(
D_x^\nu
=
\frac{\partial^{\|\nu\|_1}}{\partial x_1^{\nu_1}\cdots \partial x_d^{\nu_d}}.
\)
For \(M>0\), define \(\mathcal H^{\beta_x,\beta_t}(M)\) as the class of functions
\(
f:\mathcal X\times\mathcal T \to \mathbb R
\)
such that, for all mixed derivatives
\[
D_x^\nu \partial_t^m f(x,t),
\qquad \|\nu\|_1\le q_x,\quad  m=0,1,\ldots,q_t,
\]
these derivatives exist and are uniformly bounded by \(M\), and moreover satisfy
\[
\sup_{t\in\mathcal T}\sup_{x\neq y}
\frac{\big|D_x^\nu \partial_t^m f(x,t)-D_x^\nu \partial_t^m f(y,t)\big|}{\|x-y\|_2^{\alpha_x}}
\le M,
\qquad \|\nu\|_1=q_x,\quad m=0,1,\ldots,q_t, 
\]
and
\[
\sup_{x\in\mathcal X}\sup_{s\neq t}
\frac{\big|D_x^\nu \partial_t^{q_t} f(x,t)-D_x^\nu \partial_t^{q_t} f(x,s)\big|}{|t-s|^{\alpha_t}}
\le M,
\qquad \|\nu\|_1\le q_x.
\]
\end{definition}


 \begin{lemma}[Basis expansion under H\"older smoothness]\label{lem:time_basis_holder}
Suppose
\(
\tau_0(\cdot; \cdot) \in \mathcal H^{\beta_x,\beta_t}(M)
\)
in the sense of Definition~\ref{def:holder_xt}. Then for every integer \(K_n\ge 1\), there exist basis functions
\(\psi_1,\ldots,\psi_{K_n}\) on \([0,1]\) and coefficient functions
\(b_1,\ldots,b_{K_n} \in \mathcal H^{\beta_x}(M):\mathcal X\to\mathbb R\) such that
\[
\sup_{1\le j\le J}
\left\|
\tau_0(\cdot;t_j)-\sum_{m=1}^{K_n} b_m(\cdot)\psi_m(t_j)
\right\|_{L_2(P_X)}^2
\le
C K_n^{-2\beta_t},
\]
where \(C>0\) depends only on \(M\) and \(\beta_t\).
Consequently, it holds that
\[
\sum_{j=1}^J
\left\|
\tau_0(\cdot;t_j)-\sum_{m=1}^{K_n} b_m(\cdot)\psi_m(t_j)
\right\|_{L_2(P_X)}^2
\le
C J K_n^{-2\beta_t}.
\]
\end{lemma}

  \begin{theorem}[Error rate for joint estimation over time]\label{thm:joint_pseudo}
 Let the joint estimator
\(
\hat{\boldsymbol\tau}(x)
=
\bigl(\hat\tau(x;t_1),\ldots,\hat\tau(x;t_J)\bigr)^\top, {\rm with} \,  J>1, 
\)
be defined by
\[
\hat{\boldsymbol\tau}
=
\arg\min_{f\in\mathcal F_n}
\mathbb P_n\!\left[
\sum_{j=1}^J \{\hat\varphi_i(t_j)-f_j(X_i)\}^2
\right],
\]
where \(\mathcal F_n\) is a class of \(J\)-output neural networks with shared hidden layers. Suppose \(
    \tau_0 \in \mathcal H^{\beta_x,\beta_t}(M).
    \)
Under Assumptions~\ref{assum_consistency}--\ref{assum_opt},   
 it holds that 
\begin{align*}
& \sum_{j=1}^J
\|\hat\tau(\cdot;t_j)-\tau_0(\cdot;t_j)\|_{L_2(P_X)}^2
= \\
& O_p\!\left(
J^{1/(2\beta_t+1)}
n^{-\,4\beta_t\beta_x/\{(2\beta_t+1)(2\beta_x+d)\}}
+
\left(
J\,n^{2\beta_x/(2\beta_x+d)}
\right)^{1/(2\beta_t+1)}\rho_n
+
\delta_{n,J}^2
+
J\{r_n^{(e)}r_n^{(S)}+r_n^{(G)}\}^2
\right),
\end{align*}
 where $\delta_{n,J}$ is the optimization error defined in the proof.

\end{theorem}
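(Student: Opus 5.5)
We follow the single-time argument behind Theorem~\ref{thm:overall_rate_holder}, now summed over the $J$ outputs, and use Lemma~\ref{lem:time_basis_holder} to build a network that approximates all $J$ components at once. Write $\bar\tau_j(x)=\mathbb E[\hat\varphi_i(t_j)\mid X_i=x]$, treating the cross-fitted nuisances as fixed. We split
\[
\sum_{j=1}^J\|\hat\tau(\cdot;t_j)-\tau_0(\cdot;t_j)\|_{L_2(P_X)}^2\le 2\sum_{j}\|\hat\tau(\cdot;t_j)-\bar\tau_j\|_{L_2(P_X)}^2+2\sum_j\|\bar\tau_j-\tau_0(\cdot;t_j)\|_{L_2(P_X)}^2 .
\]
By Theorem~\ref{thm:bias_full}, applied at each $t_j$, the uniform rates of Assumption~\ref{assum_uni_consistency} hold on the finite grid. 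The second sum is therefore $O_p(J\{r_n^{(e)}r_n^{(S)}+r_n^{(G)}\}^2)$, which gives the last term of the bound.

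For the regression term, we work with the vector-valued excess risk. Because
\[
\mathbb E\sum_j\{\hat\varphi_i(t_j)-f_j(X_i)\}^2=\mathbb E\sum_j\{\hat\varphi_i(t_j)-\bar\tau_j(X_i)\}^2+\sum_j\|f_j-\bar\tau_j\|_{L_2(P_X)}^2,
\]
the standard chain works for any $f^\star\in\mathcal F_n$. We pass from population risk to empirical risk via the $J$-output analogue of Assumption~\ref{assum_emp} (uniform deviation $\rho_{n,J}$), use the optimization slack $\delta_{n,J}^2$ of Assumption~\ref{assum_opt}, compare with $f^\star$, and return to population risk. This yields
\[
\sum_j\|\hat\tau_j-\bar\tau_j\|^2\le\sum_j\|f^\star_j-\bar\tau_j\|^2+2\rho_{n,J}+\delta_{n,J}^2 .
\]
We then bound $\sum_j\|f_j^\star-\bar\tau_j\|^2\le 2\sum_j\|f_j^\star-\tau_0(\cdot;t_j)\|^2+2\sum_j\|\tau_0(\cdot;t_j)-\bar\tau_j\|^2$, where the second piece is again the nuisance term.

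The approximation step is where joint estimation helps. By Lemma~\ref{lem:time_basis_holder}, $\tau_0(\cdot;t_j)$ is within $K_n^{-\beta_t}$ in $L_2$ of $\sum_{m\le K_n}b_m(\cdot)\psi_m(t_j)$ with $b_m\in\mathcal H^{\beta_x}(M)$. By Assumption~\ref{assum_capacity}, which we read as holding for each $b_m$ as in \citet{schmidt2020deep}, each $b_m$ is approximated in sup norm by a subnetwork to accuracy $n^{-\beta_x/(2\beta_x+d)}$. We place the $K_n$ subnetworks in parallel as shared hidden layers. The output layer is linear with weights $\psi_m(t_j)$, which are bounded, for example by taking a bounded spline or Legendre-type basis. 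Each output is then within $K_n n^{-\beta_x/(2\beta_x+d)}$ of the truncated expansion; an orthonormal choice of basis would sharpen this constant. This gives
\[
\sum_j\|f_j^\star-\tau_0(\cdot;t_j)\|^2\lesssim JK_n^{-2\beta_t}+JK_n^{c}n^{-2\beta_x/(2\beta_x+d)} .
\]
The network size grows like $K_n$ times the single-output size, so the complexity term scales as $\rho_{n,J}\asymp K_n\,n^{2\beta_x/(2\beta_x+d)}\rho_n\cdot(\text{normalisation})$. Following the form of the stated bound, we take $\rho_{n,J}\lesssim K_n\rho_n$ after the single-output class is rescaled.

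The final step, and the main obstacle, is tuning $K_n$. Balancing the bias $JK_n^{-2\beta_t}$ against the variance-type term $K_n n^{-2\beta_x/(2\beta_x+d)}$ gives
\[
K_n\asymp\{J n^{2\beta_x/(2\beta_x+d)}\}^{1/(2\beta_t+1)} .
\]
This produces the first term $J^{1/(2\beta_t+1)}n^{-4\beta_t\beta_x/\{(2\beta_t+1)(2\beta_x+d)\}}$ and the second term $(Jn^{2\beta_x/(2\beta_x+d)})^{1/(2\beta_t+1)}\rho_n$.

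The delicate part is tracking how the bound depends on $K_n$. We must ensure that the sparsity, depth and magnitude constraints of the parallel network stay within $\mathcal F_n$, and that the per-output subnetwork error does not get multiplied by $J$ in a way that destroys the gain. To handle this, we would try to show that the empirical-process deviation over the $J$-output class scales linearly in the number $K_n$ of shared basis subnetworks rather than in $J$. This holds because every output is a fixed linear combination of the same $K_n$ features, so covering numbers grow with $K_n$ only. That is exactly the mechanism by which joint estimation avoids a $J$-fold variance cost.
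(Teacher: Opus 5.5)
You have the regression and nuisance pieces right. Working with $\bar\tau_j$ and the exact excess-risk identity is in fact cleaner than the paper's cross-term bound on $\eta_{ij}$. The gap is in the step that is supposed to give the sublinear dependence on $J$.

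The approximation bound you actually derive is
\[
\sum_j\|f_j^\star-\tau_0(\cdot;t_j)\|_{L_2(P_X)}^2\lesssim JK_n^{-2\beta_t}+JK_n^{c}n^{-2\beta_x/(2\beta_x+d)}.
\]
Yet in the tuning step you balance $JK_n^{-2\beta_t}$ against $K_n n^{-2\beta_x/(2\beta_x+d)}$. That term has no factor $J$ and has $K_n$ to the first power, and it appears nowhere in your argument. It is also not a variance term: in the paper it is the aggregated cost of approximating the $K_n$ coefficient functions, and it sits inside $A_{n,J}$. With the bound you proved, every term carries a factor $J$, so the best this route can give is the separate-fit rate $Jn^{-2\beta_x/(2\beta_x+d)}$. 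Exploiting the disjoint supports of the basis in Lemma~\ref{lem:time_basis_holder} (only $q_t+1$ of the $\psi_m(t_j)$ are nonzero at each $t_j$) reduces $K_n^{c}$ to a constant. It cannot remove $J$, because bounded basis values plus the triangle inequality control each output separately.

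The paper gets the $J$-free term from an extra basis-stability hypothesis introduced inside its proof. It requires the grid Gram matrix $G_n=\bigl(\sum_{j=1}^J\psi_m(t_j)\psi_{m'}(t_j)\bigr)_{m,m'}$ to satisfy $\lambda_{\max}(G_n)\le C$. Then, with $a_m=\tilde b_m-b_m$ and $a(x)=(a_1(x),\ldots,a_{K_n}(x))^\top$,
\[
\sum_{j=1}^J\Bigl\|\sum_{m=1}^{K_n}a_m\psi_m(t_j)\Bigr\|_{L_2(P_X)}^2=\int_{\mathcal X}a(x)^\top G_n\,a(x)\,dP_X(x)\le\lambda_{\max}(G_n)\sum_{m=1}^{K_n}\|a_m\|_{L_2(P_X)}^2\le CK_n\,n^{-2\beta_x/(2\beta_x+d)}.
\]
The same condition gives the linear map $b\mapsto\bigl(\sum_m b_m\psi_m(t_j)\bigr)_{j\le J}$ a $J$-free Lipschitz constant. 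The paper uses this to transfer covering numbers from $\mathcal H_n^{K_n}$ to the basis-induced joint class and obtain $\rho_{n,J}\lesssim K_n\rho_n$. You instead adopt that bound ``following the form of the stated bound'', which is not a derivation. So you need to state this hypothesis explicitly (it is not among Assumptions~\ref{assum_consistency}--\ref{assum_opt}) and use it in both places.

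Be aware that the hypothesis is restrictive:
\begin{itemize}
\item If $\lambda_{\max}(G_n)\le C$ and $|b_m|\le M$, then $\sum_j\|\sum_m b_m\psi_m(t_j)\|_{L_2(P_X)}^2\le CK_nM^2$.
\item But $\sum_j\|\tau_0(\cdot;t_j)\|_{L_2(P_X)}^2$ is of order $J$ whenever $\|\tau_0(\cdot;t)\|_{L_2(P_X)}$ is bounded below.
\item Together with the lemma's error bound, this forces $J\lesssim K_n$ for large $K_n$.
\item At the paper's choice of $K_n$, that is exactly the regime $J^{2\beta_t}\lesssim n^{2\beta_x/(2\beta_x+d)}$, where the stated first term is no smaller than $Jn^{-2\beta_x/(2\beta_x+d)}$.
\end{itemize}
Your orthonormal-basis remark does not help either: $L_2[0,1]$-orthonormality gives $G_n\approx J\,I$ on an equispaced grid.
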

\begin{remark} 
  If one instead fits \(J\) separate DNNs, one for each time point \(t_j\), without using smoothness in \(t\), then the leading order of  approximation error is 
\(
n^{-2\beta_x/(2\beta_x+d)}
\)
for each \(j\) by Theorem \ref{thm:overall_rate_holder}, so the aggregate approximation error is of order
\(
J\, n^{-2\beta_x/(2\beta_x+d)}.
\)
By contrast, the joint estimator achieves a leading order of approximation error
\(
J^{1/(2\beta_t+1)}
n^{-\,4\beta_t\beta_x/\{(2\beta_t+1)(2\beta_x+d)\}}.
\)
Since \(1/(2\beta_t+1)<1\) for \(\beta_t>0\), the dependence on \(J\) is strictly sublinear. Therefore, the joint estimator achieves a smaller approximation rate than fitting \(J\) separate DNNs independently.

\end{remark}

\section{Simulation Study}
\label{simulation}
To evaluate the finite-sample performance of DSL and to empirically verify its double robustness property, we conducted simulation studies under settings where either the survival model or the propensity score model is misspecified. We compared DSL with several commonly used causal machine learning methods, including the M-learner~\citep{bo2025evaluating}, X-learner~\citep{kunzel2019metalearners}, and R-learner~\citep{nie2021quasi}. {We also included two doubly robust learners based on nonparametric failure time Bayesian additive regression trees (NFT-BART), namely DRL-NFT-BART and PSDRL-NFT-BART~\citep{yang2025doubly}, and  the causal survival forest~\citep{cui2023estimating}.}

Across all scenarios, we generated independent datasets consisting of 800 training samples and 400 testing samples. For each subject $i$, a 30-dimensional covariate vector $X_i$ was generated independently from a uniform distribution on $[-3,3]^{30}$.

\textit{Case 1: Correct survival and propensity score models} In this setting, both the survival model and the propensity score model were correctly specified. The treatment assignment $W_i$ was generated from a logistic regression model,
\[
\operatorname{logit}\{P(W_i = 1 \mid X_i )\} = \alpha^{\top}X_i,
\]
where $\operatorname{logit}(p) = \log\{p/(1-p)\}$. The survival time followed a Weibull-Cox proportional hazards model,
\[
\lambda_i(t) = \lambda \nu t^{\nu-1}
\exp\big(\beta^{\top}X_i + W_i\,\theta^{\top}X_i\big),
\]
and the censoring time followed an Exponential-Cox model,
\[
\lambda_i^C(t) = \lambda^C \exp(\gamma^{\top}X_i).
\]
We set $\lambda = 0.1$, $\nu = 1.2$, and $\lambda^C = 0.2$, so the average censoring rate is approximately 35\%. The coefficient vectors
$\alpha, \beta, \theta, \gamma \in \mathbb{R}^{30}$
were generated as evenly spaced values on the intervals
$[-1,1]$, $[0.4,-0.2]$, $[0.3,-0.1]$, and $[0.3,-0.3]$, respectively. {Additional simulation results under a simplified setting with lower-dimensional covariates and a larger sample size are provided in the appendix for comparison.}

\textit{Case 2: Correct survival model, misspecified propensity score} This setting evaluates robustness to misspecification of the propensity score model. The survival and censoring mechanisms were identical to those in Case 1, while the treatment assignment followed a nonlinear model,
\[
\operatorname{logit}\{P(W_i = 1 \mid X_i)\}
= 0.5
+ 0.45\tanh(\eta_i/1.2)
+ 0.25\mathbb{I}(\eta_i > 0)
- 0.15\mathbb{I}(\eta_i < 1)
+ 0.1\sin(\eta_i^2/3),
\]
where $\eta_i = \alpha^{\top}X_i$ and $\alpha$ is defined as in Case 1. The resulting censoring rate was approximately 40\%.

\textit{Case 3: Misspecified survival model, correct propensity score} This setting evaluates robustness to misspecification of the survival model. The propensity score was generated as in Case 1, while the survival time followed a non-proportional hazards model with time-varying treatment effects,
\[
\lambda_i(t) = \lambda_0
\exp\left\{
g(X_i)
+ W_i \cdot h(X_i)\cdot \sin\!\big(v(X_i)\, t\big)
\right\},
\]
where
\[
g(X_i) = \beta^{\top}X_i,\quad
u(X_i) = \gamma^{\top}X_i,\quad
v(X_i) = \theta^{\top}X_i,
\]
\[
h(X_i)
= 0.5\cos\big(u(X_i)\big)
+ 0.2u(X_i)^2.
\]

The coefficient vectors $\beta, \gamma, \theta$ were generated as evenly spaced values on the intervals
$[-0.3,0.5]$, $[-0.2,0.3]$, and $[-0.2,0.4]$, respectively. We set $\lambda_0 = 0.05$.
The censoring mechanism was the same as in Case~1, except that $\lambda^C = 0.05$, yielding a censoring rate of approximately 50\%.

In each simulation setting, we evaluated CATE over a grid of $J$ time points, with
\(
J \in \{1, 50, 100\}.
\)
The time grid consisted of $J$ equally spaced points between the 20th and 80th percentiles of the observed survival time distribution. When $J=1$, the evaluation time corresponded to the median observed survival time.

We generated $200$ independent replicates. For each replicate, DSL was trained on the training sample and evaluated on the testing sample. For replicate $r=1, \ldots, 200$, we computed the bias and mean squared error (MSE) averaged over test subjects and time points,
\[
\mathrm{Bias}^{(r)}
=
\frac{1}{NJ}\sum_{i=1}^n\sum_{j=1}^J
\Big\{
\hat\tau^{(r)}(X_i^{(r)};t_j^{(r)})
-
\tau_0(X_i^{(r)};t_j^{(r)})
\Big\},
\]
\[
\mathrm{MSE}^{(r)}
=
\frac{1}{NJ}\sum_{i=1}^n\sum_{j=1}^J
\Big\{
\hat\tau^{(r)}(X_i^{(r)};t_j^{(r)})
-
\tau_0(X_i^{(r)};t_j^{(r)})
\Big\}^2,
\]
where $N = 400$ denotes the number of test samples. We report the average bias and MSE across replicates,
 along with their corresponding 95\% Monte Carlo confidence intervals, in
Tables~\ref{simulate_performance_bias} and~\ref{simulate_performance_mse}.

We employed five-fold cross-fitting to construct the estimated pseudo-outcomes. The nuisance functions for the survival and censoring time was estimated using Cox proportional hazards models with linear predictors, and the treatment assignment was modeled using a logistic regression with linear covariates. {The DSL architecture consists of three fully connected hidden layers with 128, 64, and 32 neurons, respectively, each using ReLU activation functions}, and was trained using the Adam optimizer with a learning rate of 0.003. Regularization is achieved through the architectural design of the network, including its depth and width, which implicitly control model complexity. All neural network models were implemented in \texttt{PyTorch}. For competing methods, the M-Learner used logistic regression (via \texttt{glm}) for propensity score estimation and generalized boosted regression models (GBM; \texttt{gbm} package) for outcome regression, with inverse probability of censoring weights estimated using random survival forests. The X-Learner employed survival random forests implemented in \texttt{randomForestSRC}, with tuning over terminal node sizes from 1 to 100 and candidate splitting variables starting at 15. The outcome regressions for the X-Learner were also fitted using GBM. {The Causal Survival Forest was implemented using the \texttt{grf} package with 2000 trees and a minimum node size of 5.   Finally, we implemented two doubly robust learners, DRL-NFT-BART and SDRL-NFT-BART\citep{yang2025doubly} via \texttt{nftbart}.  DRL-NFT-BART combines inverse probability of censoring weighting with outcome regression, while PSDRL-NFT-BART uses a propensity-score-weighted pseudo-outcome. Propensity scores were estimated via Super Learner, censoring probabilities via random survival forests, and NFT-BART was fitted with 1000 burn-in and 2000 posterior samples; CATE was estimated using a regression forest with 2000 trees.}

 {Table~\ref{simulate_performance_bias} summarizes the mean bias and corresponding 95\% Monte Carlo confidence intervals across all simulation settings. Overall, the proposed DSL consistently exhibits the smallest bias among competing approaches. When both the survival and treatment models are correctly specified, the confidence intervals for DSL include zero at \(J=1\) and \(J=50\), indicating negligible bias; similar behavior persists when only the survival model is correctly specified, with coverage maintained across all \(J\). In contrast, competing learners, including M-, X-, and R-learners, exhibit substantial and persistent bias that does not diminish with increasing \(J\). Among the doubly robust methods, DRL-NFT-BART shows partial robustness, with bias substantially reduced under treatment model misspecification but remaining non-negligible when the survival model is misspecified. PSDRL-NFT-BART, which relies on propensity-score-weighted pseudo-outcomes, exhibits consistently large bias across all scenarios, indicating limited robustness in finite samples. When only the treatment model is correctly specified, DSL continues to outperform all competitors, achieving markedly smaller bias. For example, at \(J=100\), DSL attains a bias of 0.011 (95\% CI: 0.005, 0.016), compared with 0.219 (95\% CI: 0.203, 0.234) for the R-learner and 0.236 (95\% CI: 0.217, 0.255) for DRL-NFT-BART.}

    {Table~\ref{simulate_performance_mse} reports the corresponding mean squared error (MSE). Across all scenarios, DSL achieves competitive performance, with only a mild increase in MSE as the number of time points \(J\) grows. In the fully correctly specified setting, DRL-NFT-BART and the X-learner attain slightly lower MSE than DSL, although the differences are modest. Under nuisance misspecification, DSL demonstrates a clear advantage. When only the survival model is correctly specified, DSL substantially outperforms all competing methods, with MSE increasing only slightly from 0.070 (95\% CI: 0.068, 0.072) at \(J=1\) to 0.076 (95\% CI: 0.074, 0.077) at \(J=100\), compared to higher MSEs for alternatives such as the X-learner. When the survival model is misspecified, DRL-NFT-BART achieves lower MSE at small \(J\), but this advantage diminishes as \(J\) increases, and DSL remains competitive. In contrast, PSDRL-NFT-BART exhibits consistently higher MSE across all settings, consistent with its large bias.}

{ In summary, DSL remains competitive under correct specification, incurring minimal efficiency loss, while offering clear advantages under nuisance misspecification through reduced bias and enhanced robustness, particularly as model complexity increases.}

\begin{table}[ht]
    \caption{Bias of conditional average treatment effect estimators across time under nuisance model misspecification}
    \label{simulate_performance_bias}
    \begin{threeparttable}
    \centering
    \makebox[\textwidth]{\begin{tabular}{llccc}
        \hline
    &  & \shortstack{Case 1: Correct Survival\\ Correct Treatment} & \shortstack{Case 2: Correct Survival\\ Wrong Treatment} & \shortstack{Case 3: Wrong Survival\\ Correct Treatment} \\
       \hline
    \multirow{5}{*}{\(J = 1\)}
    &Deep Survival Learner & 0.002(-0.003, 0.007) & 0.003(-0.003, 0.010) & 0.012(0.005, 0.018)\\
    &M-Learner             & 0.199(0.193, 0.204)  & 0.221(0.213, 0.229) & 0.305(0.300, 0.310)\\
    &R-Learner             & 0.068(0.054, 0.082)  & 0.030(0.011, 0.049) & 0.223(0.207, 0.239)\\
    &X-Learner             & 0.204(0.198, 0.211)  & 0.221(0.210, 0.232) & 0.297(0.289, 0.305)\\
    &DRL-NFT-BART          & 0.177(0.161, 0.194)  & -0.033(-0.067, 0.001) & 0.209(0.188, 0.231)\\
    &PSDRL-NFT-BART        & 0.360(0.354, 0.367)  & 0.372(0.364, 0.380) & 0.386(0.380, 0.391)\\
    &Causal Survival Forest& 0.378(0.371, 0.384)  & 0.392(0.382, 0.402) & 0.475(0.468, 0.481)\\
           \hline
    \multirow{5}{*}{\(J = 50\)}
    &Deep Survival Learner & -0.001(-0.006, 0.004) & 0.001(-0.006, 0.007) & 0.012(0.006, 0.018)\\
    &M-Learner             & 0.199(0.194, 0.204) & 0.221(0.213, 0.229) & 0.305(0.299, 0.310)\\
    &R-Learner             & 0.068(0.054, 0.081) & 0.022(0.003, 0.041) & 0.220(0.205, 0.234)\\
    &X-Learner             & 0.206(0.200, 0.213) & 0.226(0.216, 0.236) & 0.300(0.292, 0.308)\\
    &DRL-NFT-BART          & 0.167(0.147, 0.187) & -0.024(-0.056, 0.009) & 0.238(0.221, 0.255)\\
    &PSDRL-NFT-BART        & 0.360(0.354, 0.366) & 0.373(0.364, 0.381) & 0.386(0.380, 0.392)\\
    &Causal Survival Forest& 0.378(0.372, 0.384) & 0.393(0.383, 0.403) & 0.474(0.468, 0.481)\\
             \hline
    \multirow{5}{*}{\(J = 100\)}
    &Deep Survival Learner & -0.007(-0.012, -0.002) & -0.001(-0.008, 0.005) & 0.011(0.005, 0.016)\\
    &M-Learner             & 0.198(0.193, 0.204) & 0.221(0.213, 0.229) & 0.305(0.299, 0.310)\\
    &R-Learner             & 0.073(0.061, 0.086) & 0.028(0.010, 0.047) & 0.219(0.203, 0.234)\\
    &X-Learner             & 0.205(0.198, 0.212) & 0.229(0.219, 0.240) & 0.295(0.286, 0.303)\\
    &DRL-NFT-BART          & 0.158(0.136, 0.180) & -0.037(-0.071, -0.002) & 0.236(0.217, 0.255)\\
    &PSDRL-NFT-BART        & 0.360(0.354, 0.367) & 0.373(0.364, 0.381) & 0.384(0.378, 0.389)\\
    &Causal Survival Forest& 0.378(0.372, 0.384) & 0.393(0.384, 0.403) & 0.474(0.468, 0.481)\\
             \hline
        \end{tabular}}
     \begin{tablenotes}\footnotesize
\item \textsuperscript{1} Mean bias and 95\% Monte Carlo confidence intervals are computed based on 200 simulation replicates.
\item \textsuperscript{2} Conditional average treatment effects are evaluated at \(J \in \{1, 50, 100\}\) time points, equally spaced between the 20th and 80th percentiles of the observed survival time distribution.
\item \textsuperscript{3} Columns correspond to combinations of correctly and incorrectly specified survival and treatment assignment models; the censoring model is correctly specified in all scenarios.
\end{tablenotes}
    \end{threeparttable}
\end{table}

\begin{table}[ht]
    \caption{Mean squared error of conditional average treatment estimators across time under nuisance model misspecification}
    \label{simulate_performance_mse}
    \begin{threeparttable}
    \centering
    \makebox[\textwidth]{\begin{tabular}{llccc}
        \hline
    &  & \shortstack{Case 1: Correct Survival\\ Correct Treatment} & \shortstack{Case 2: Correct Survival\\ Wrong Treatment} & \shortstack{Case 3: Wrong Survival\\ Correct Treatment} \\
       \hline
    \multirow{5}{*}{\(J = 1\)}
    &Deep Survival Learner & 0.077(0.075, 0.080) & 0.070(0.068, 0.072) & 0.104(0.101, 0.106)\\
    &M-Learner             & 0.114(0.110, 0.117) & 0.122(0.117, 0.127) & 0.190(0.185, 0.194)\\
    &R-Learner             & 0.174(0.168, 0.180) & 0.235(0.227, 0.244) & 0.260(0.251, 0.269)\\
    &X-Learner             & 0.081(0.078, 0.084) & 0.093(0.089, 0.098) & 0.154(0.150, 0.159)\\
    &DRL-NFT-BART          & 0.070(0.064, 0.075) & 0.133(0.113, 0.153) & 0.087(0.077, 0.097)\\
    &PSDRL-NFT-BART        & 0.142(0.138, 0.146) & 0.152(0.147, 0.157) & 0.152(0.148, 0.156)\\
    &Causal Survival Forest& 0.156(0.151, 0.160) & 0.170(0.163, 0.176) & 0.229(0.222, 0.235)\\
           \hline
    \multirow{5}{*}{\(J = 50\)}
    &Deep Survival Learner & 0.084(0.082, 0.086) & 0.074(0.072, 0.076) & 0.112(0.109, 0.115)\\
    &M-Learner             & 0.114(0.110, 0.117) & 0.122(0.117, 0.127) & 0.189(0.185, 0.194)\\
    &R-Learner             & 0.174(0.168, 0.180) & 0.237(0.229, 0.245) & 0.261(0.251, 0.270)\\
    &X-Learner             & 0.083(0.080, 0.085) & 0.094(0.090, 0.098) & 0.157(0.152, 0.161)\\
    &DRL-NFT-BART          & 0.077(0.062, 0.091) & 0.126(0.108, 0.143) & 0.087(0.081, 0.094)\\
    &PSDRL-NFT-BART        & 0.141(0.137, 0.146) & 0.152(0.147, 0.158) & 0.152(0.148, 0.156)\\
    &Causal Survival Forest& 0.156(0.152, 0.160) & 0.170(0.164, 0.177) & 0.228(0.222, 0.235)\\
             \hline
    \multirow{5}{*}{\(J = 100\)}
    &Deep Survival Learner & 0.084(0.082, 0.086) & 0.076(0.074, 0.077) & 0.114(0.111, 0.117)\\
    &M-Learner             & 0.114(0.110, 0.117) & 0.123(0.118, 0.128) & 0.190(0.185, 0.194)\\
    &R-Learner             & 0.173(0.167, 0.179) & 0.233(0.225, 0.241) & 0.260(0.252, 0.269)\\
    &X-Learner             & 0.082(0.080, 0.085) & 0.096(0.091, 0.100) & 0.154(0.150, 0.158)\\
    &DRL-NFT-BART          & 0.082(0.066, 0.098) & 0.139(0.117, 0.161) & 0.088(0.082, 0.094)\\
    &PSDRL-NFT-BART        & 0.142(0.138, 0.146) & 0.152(0.147, 0.158) & 0.150(0.146, 0.155)\\
    &Causal Survival Forest& 0.156(0.152, 0.161) & 0.171(0.164, 0.177) & 0.228(0.222, 0.234)\\
             \hline
        \end{tabular}}
     \begin{tablenotes}\footnotesize
\item \textsuperscript{1} Mean squared error and 95\% Monte Carlo confidence intervals are computed based on 200 simulation replicates.
\item \textsuperscript{2} Conditional average treatment effects are evaluated at \(J \in \{1, 50, 100\}\) time points, equally spaced between the 20th and 80th percentiles of the observed survival time distribution.
\item \textsuperscript{3} Columns correspond to combinations of correctly and incorrectly specified survival and treatment assignment models; the censoring model is correctly specified in all scenarios.
\end{tablenotes}
    \end{threeparttable}
\end{table}

 \section{Boston Lung Cancer Study}
\label{BLCS}
{ We apply the proposed method to investigate heterogeneous treatment effects of perioperative chemotherapy using data from the Boston Lung Cancer Study (BLCS), a large prospective longitudinal cancer cohort initiated in 1992 and followed through 2022~\citep{zhai2022spirometry,alvarez2025effects,wang2023prediagnosis}. The BLCS has enrolled over 12{,}000 patients treated at Massachusetts General Hospital and includes detailed clinical, pathological, and follow-up information, including treatment history and survival outcomes.}

{ We evaluate the effect of perioperative chemotherapy on overall survival among patients with non–small cell lung cancer (NSCLC). The primary outcome is the time from diagnosis to death or censoring. Because perioperative chemotherapy is primarily administered to patients with early-stage disease, we restrict the analysis to patients with stage II or III NSCLC who underwent surgical resection. Within this subset, we compare patients who received perioperative chemotherapy with those who did not. Patients with missing information on cancer histology, stage, treatment status, or diagnosis date are excluded. After applying these inclusion and exclusion criteria, a total of 784 patients from the BLCS cohort are included in the analysis.}

 Based on the descriptive statistics in Table~\ref{descriptive analysis}, the study cohort consists of 784 patients, among whom 521 (66\%) deaths were observed during follow-up. The median survival time for the overall population is 5.4 years (95\% CI: 4.7--6.1).
Patients who received surgery plus perioperative chemotherapy are defined as the treatment group, whereas those who underwent surgery alone constitute the control group. A total of 214 patients (27\%) received perioperative chemotherapy.
Baseline characteristics differ between the two groups. Patients in the treatment group are younger (median age 64 years, IQR: 58--71) than those in the control group (median age 68 years, IQR: 60--75), and a higher proportion of treated patients are male (54\% vs.\ 48\%).
Survival outcomes appear more favorable in the treatment group: 110 deaths (51\%) were observed among treated patients compared with 411 deaths (72\%) in the control group. The median survival time is 6.5 years (95\% CI: 5.3--8.6) for the treatment group, versus 4.8 years (95\% CI: 4.1--5.8) for the control group.
{ Missing values in age, BMI, and smoking intensity are imputed using their respective sample means (mean imputation).}  Kaplan--Meier curves (Figure~\ref{KM curve}) further illustrate differences in survival probabilities between patients who received perioperative chemotherapy and those who did not.

\begin{table}[hbpt]
    \caption{\textbf{Clinical Characteristics of Patients from the Boston Lung Cancer Cohort}}
    \label{descriptive analysis}
    \begin{threeparttable}
    \centering
    \fontsize{10}{12}\selectfont
    \makebox[\textwidth]{\begin{tabular}{lccc}
        \hline
        \textbf{Characteristic} & \multicolumn{1}{c}{\shortstack{\textbf{Overall}\\N = 784\textsuperscript{1}}} & \multicolumn{1}{c}{\shortstack{\textbf{\textbf{Surgery}}\\N = 570\textsuperscript{1}}} & \multicolumn{1}{c}{\shortstack{\textbf{\textbf{Surgery + Chemotherapy}}\\N = 214\textsuperscript{1}}} \\
        \hline
        Death & 521 (66\%) & 411 (72\%) & 110 (51\%) \\ 
        Median Survival Time (years) & 5.4 (4.7, 6.1) & 4.8 (4.1, 5.8) & 6.5 (5.3, 8.6) \\ 
        Age at disgonsis (yrs) & 67 (59, 74) & 68 (60, 75) & 64 (58, 71) \\ 
        \qquad    Unknown & 19 & 12 & 7 \\ 
        BMI (kg/m\textsuperscript{2}) & 25.8 (22.9, 29.5) & 25.8 (22.8, 29.6) & 26.3 (23.5, 29.5) \\ 
        \qquad  Unknown & 60 & 53 & 7 \\ 
        Tumor Stage &  &  &  \\ 
        \qquad    II & 445 (57\%) & 333 (58\%) & 112 (52\%) \\ 
        \qquad    III & 339 (43\%) & 237 (42\%) & 102 (48\%) \\ 
        Gender &  &  &  \\ 
        \qquad    Female & 394 (50\%) & 296 (52\%) & 98 (46\%) \\ 
        \qquad    Male & 390 (50\%) & 274 (48\%) & 116 (54\%) \\ 
        Race &  &  &  \\ 
        \qquad  White & 746 (95\%) & 547 (96\%) & 199 (93\%) \\ 
        \qquad  Asian & 15 (1.9\%) & 7 (1.2\%) & 8 (3.7\%) \\ 
        \qquad  Black & 8 (1.0\%) & 5 (0.9\%) & 3 (1.4\%) \\ 
        \qquad  Other & 15 (1.9\%) & 11 (1.9\%) & 4 (1.9\%) \\ 
        Smoking Status &  &  &  \\ 
        \qquad   Current Smoker & 213 (27\%) & 166 (29\%) & 47 (22\%) \\ 
        \qquad   Former Smoker & 490 (62\%) & 353 (62\%) & 137 (64\%) \\ 
        \qquad    Never Smoker & 81 (10\%) & 51 (8.9\%) & 30 (14\%) \\ 
              \hline
    \end{tabular}}
    \begin{tablenotes}\footnotesize
    \item[\textsuperscript{1}]Median (IQR); n (\%)
    \end{tablenotes}
    \end{threeparttable}
\end{table}

\begin{figure}[ht]
    \centering
    \includegraphics[scale=0.23]{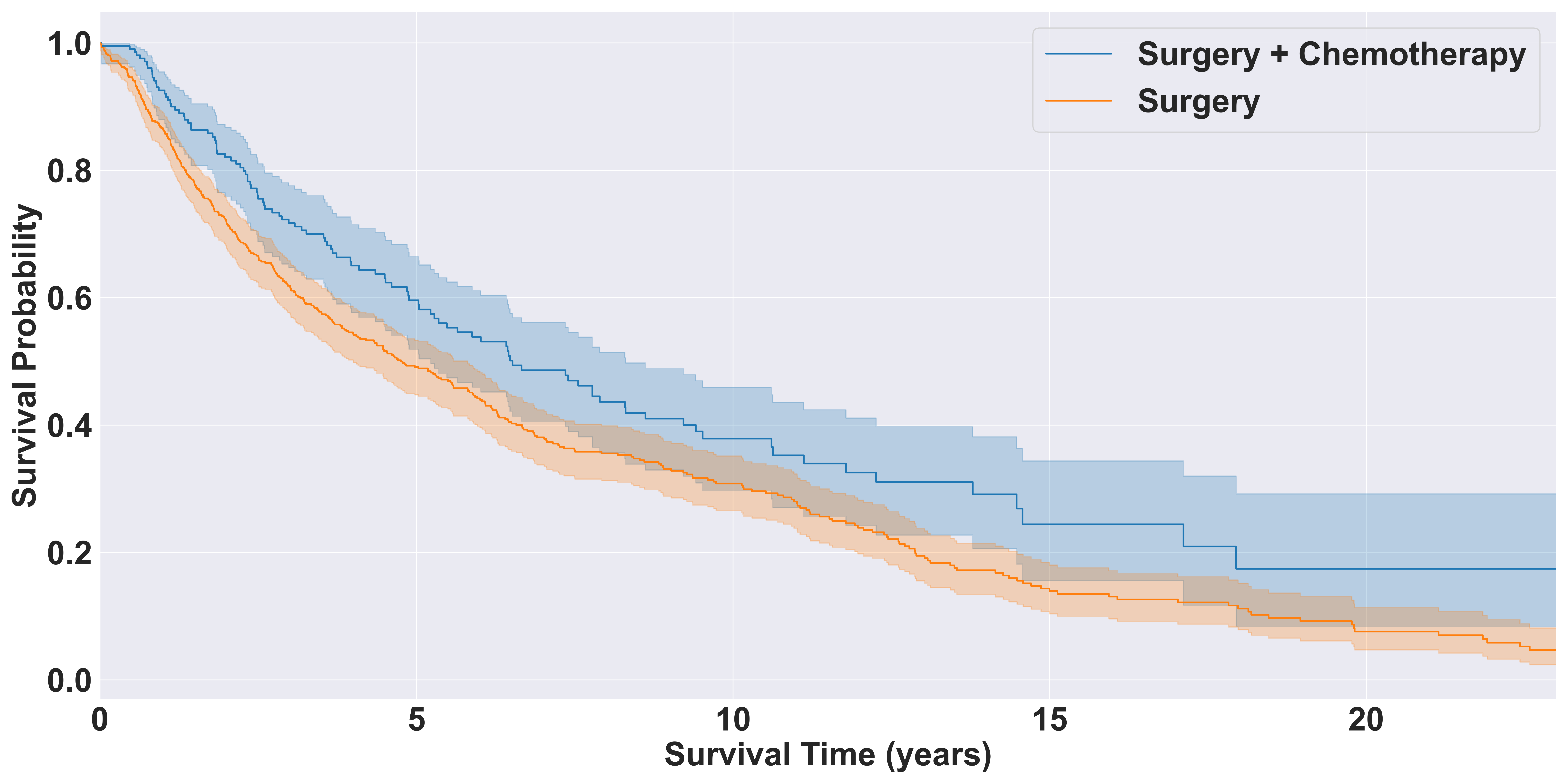}
    \caption{\textbf{Kaplan--Meier survival curves by treatment group.}} 
    \label{KM curve}
\end{figure}

 We estimate CATEs at 10 prespecified time points ranging from 1 to 23 years after surgery. At each time point $t$, the CATE is defined as the difference in survival probabilities at time $t$ between surgery plus perioperative chemotherapy and surgery alone, conditional on pre-treatment covariates.  { In our implementation, DSL employs a three-layer feedforward network with hidden layers of sizes 64, 32, and 16, respectively, and uses ReLU activation functions.}

 Figure~\ref{cate} presents the estimated CATEs across key baseline covariates, including gender, tumor stage, age, body mass index (BMI), and smoking intensity. Across all subgroups, perioperative chemotherapy is associated with positive treatment effects over the time horizon considered, corresponding to higher survival probabilities relative to surgery alone. The magnitude of the estimated CATE increases during the first ten years following surgery and attenuates thereafter.

{Gender-specific results (Figure~\ref{cate}(a)) indicate larger treatment effects for female patients. Holding other covariates at their mean or mode, the estimated increases in 5-, 10-, and 20-year survival probabilities are 3.9\%, 9.7\%, and 6.8\% for female patients, compared with 2.7\%, $-2.8\%$, and $-0.5\%$ for male patients. Substantial heterogeneity is also observed by tumor stage (Figure~\ref{cate}(b)). For patients with stage II NSCLC, perioperative chemotherapy is associated with increases of 3.9\%, 9.7\%, and 6.8\% in 5-, 10-, and 20-year survival probabilities, respectively. In contrast, the estimated effects for stage III patients are smaller, with corresponding increases of 3.3\%, 3.7\%, and 3.3\%.}

 Age-specific effects (Figure~\ref{cate}(c)) are relatively stable. Holding other covariates at their mean or mode, the estimated increases in 5-, 10-, and 20-year survival probabilities are 3.9\%, 10.1\%, and 7.1\% for patients aged 65 years, and 3.6\%, 6.7\%, and 5.1\% for patients aged 75 years. 
Figure~\ref{cate}(d) suggests that patients with higher BMI experience larger treatment effects: patients with BMI $=32$ show increases of 4.0\%, 10.1\%, and 7.1\% at 5, 10, and 20 years, compared with 3.8\%, 8.1\%, and 5.9\% for patients with BMI $=22$. 
Smoking intensity also modifies treatment effects (Figure~\ref{cate}(e)). Patients smoking 10 cigarettes per day exhibit larger survival gains (4.3\%, 13.2\%, and 8.9\%) than those smoking 30 cigarettes per day (3.7\%, 7.6\%, and 5.6\%).
Overall, perioperative chemotherapy is associated with meaningful survival benefits for patients with early-stage NSCLC, with effects concentrated in the first ten years following surgery and attenuating thereafter. This temporal pattern is consistent with long-term follow-up from the International Adjuvant Lung Cancer Trial, which reported a significant interaction between treatment and follow-up time, with benefits diminishing beyond five years~\citep{arriagada2010long}. 
The observed heterogeneity by gender and tumor stage aligns with prior studies reporting greater benefits among female patients~\citep{leiter2022benefits,sandler2016gender} and stronger effects in stage II disease relative to more advanced stages~\citep{douillard2006adjuvant}. The findings on smoking intensity are also consistent with prior evidence linking lower smoking exposure to improved chemotherapy outcomes~\citep{zhang2008influence}. 
Finally, patients with higher BMI appear to benefit more from perioperative chemotherapy, a pattern consistent with the reported obesity paradox in lung cancer~\citep{sakin2021effect}.
 These results underscore the importance of accounting for treatment effect heterogeneity and equity in evaluating treatment effectiveness~\citep{arriagada2010long,leiter2022benefits,sandler2016gender,douillard2006adjuvant,zhang2008influence}.
\begin{figure}[ht]
    \centering
    \includegraphics[scale=0.31]{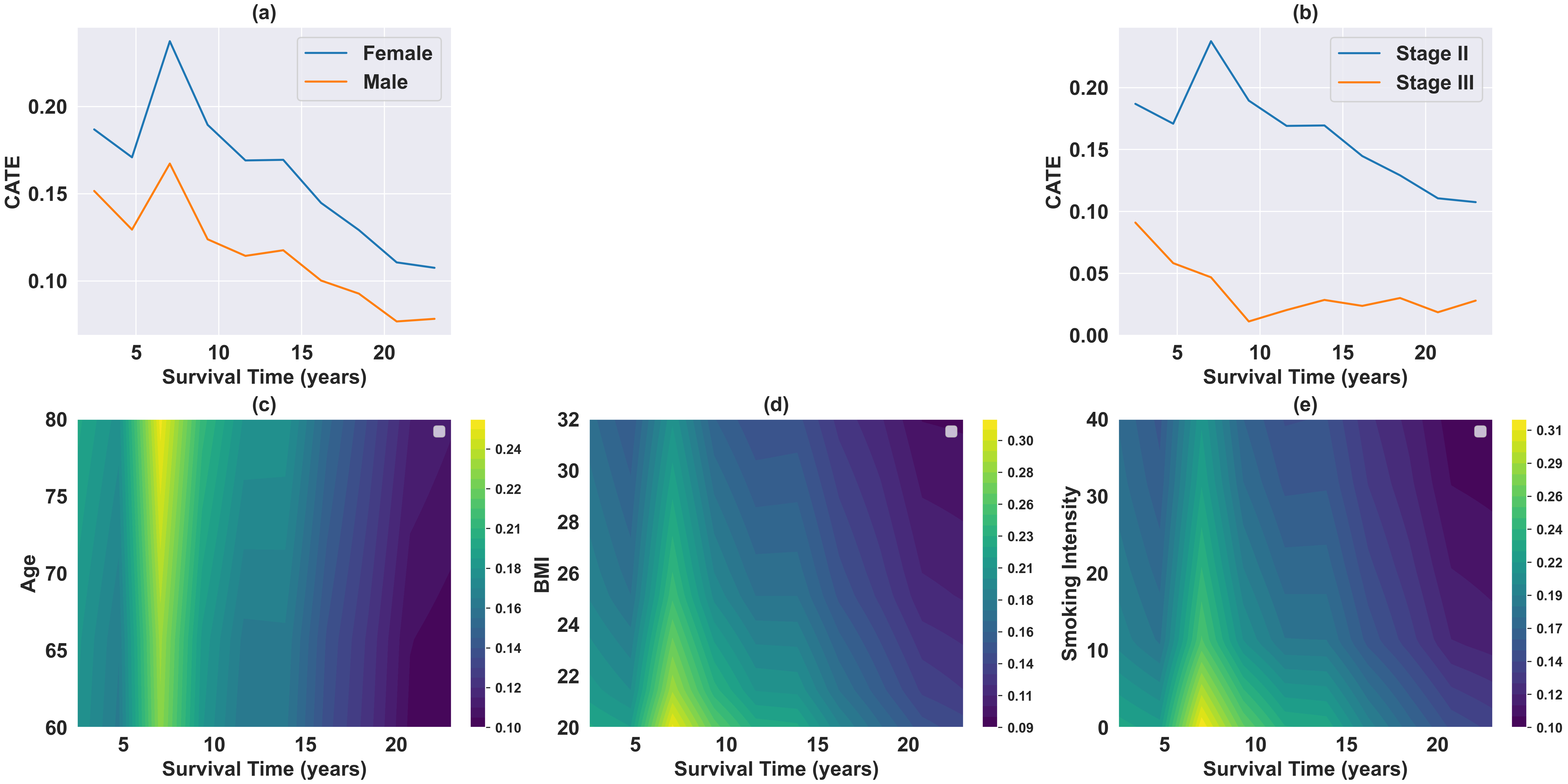}
        \caption{\textbf{Estimated conditional average treatment effects across key baseline covariates.} Each panel displays the estimated CATE as a function of one covariate of interest (gender, tumor stage, age, body mass index, or smoking intensity), with all remaining covariates fixed at their sample means or modal values.} 
    \label{cate}
\end{figure}

\section{Discussion}

 We propose the Deep Survival Learner (DSL), a causal deep learning framework for estimating heterogeneous treatment effects with right-censored survival outcomes. The key contribution is the integration of a doubly robust pseudo-outcome for time-specific conditional average treatment effects with joint learning across time via shared representations. This formulation directly targets treatment effect trajectories over clinically relevant follow-up intervals and improves efficiency relative to pointwise approaches that ignore temporal dependence.

The method relies on standard identification conditions, including unconfoundedness and appropriate handling of censoring. While the doubly robust construction mitigates misspecification of either the outcome or treatment model, correct specification of the censoring mechanism remains essential. This highlights the need for extensions that relax noninformative censoring, for example through joint modeling, sensitivity analysis, or alternative identification strategies.

From a statistical perspective, DSL performs joint estimation of a vector-valued target over a time grid, trading bias and variance through shared representations. This raises questions on optimal grid design, regularization, and approximation error in high-dimensional settings. Extensions to competing risks would further broaden the framework to multivariate event processes and require redefining target functionals.

Finally, our current analysis focuses on point estimation. Developing valid uncertainty quantification for time-indexed CATEs under censoring remains an important open problem, requiring new theoretical tools that account for dependence across time and the complexity of deep learners.

 \section*{Acknowledgements}
This work was supported by NIH grants 2R01CA249096-05 and R01CA269398. We are grateful to the Editor, the Associate Editor, and the two referees for their insightful reviews, which substantially improved the article. We also thank our long-term collaborator, Dr.~David Christiani, for providing the BLCS data and for his many insightful clinical suggestions.

\bibliographystyle{unsrtnat}
\bibliography{references}

\begin{thebibliography}{35}
\providecommand{\natexlab}[1]{#1}
\providecommand{\url}[1]{\texttt{#1}}
\expandafter\ifx\csname urlstyle\endcsname\relax
  \providecommand{\doi}[1]{doi: #1}\else
  \providecommand{\doi}{doi: \begingroup \urlstyle{rm}\Url}\fi

\bibitem[Imbens and Rubin(2016)]{imbens2016causal}
Guido~W Imbens and Donald~B Rubin.
\newblock \emph{Causal inference for statistics, social, and biomedical sciences: An introduction}.
\newblock Taylor \& Francis, 2016.

\bibitem[Dharmarajan et~al.(2017)Dharmarajan, Bragg-Gresham, Morgenstern, Gillespie, Li, Powe, Tuot, Banerjee, R{\'\i}os~Burrows, Rolka, Saydah, and Saran]{dharmarajan2017ckd}
Sai~H. Dharmarajan, Jennifer~L. Bragg-Gresham, Hal Morgenstern, Brenda~W. Gillespie, Yi~Li, Neil~R. Powe, Delphine~S. Tuot, Tanushree Banerjee, Nilka R{\'\i}os~Burrows, Deborah~B. Rolka, Sharon~H. Saydah, and Rajiv Saran.
\newblock State-level awareness of chronic kidney disease in the {US}.
\newblock \emph{American Journal of Preventive Medicine}, 53\penalty0 (3):\penalty0 300--307, 2017.
\newblock \doi{10.1016/j.amepre.2017.03.023}.

\bibitem[Haddad et~al.(2008)Haddad, Crum, Chen, Krane, Posner, Li, and Burk]{haddad2008hpv16}
Robert Haddad, Christopher Crum, Zigui Chen, Jeffrey Krane, Marshall Posner, Yi~Li, and Robert Burk.
\newblock {HPV16} transmission between a couple with {HPV}-related head and neck cancer.
\newblock \emph{Oral Oncology}, 44\penalty0 (8):\penalty0 812--815, 2008.
\newblock \doi{10.1016/j.oraloncology.2007.09.004}.

\bibitem[Zhai et~al.(2022)Zhai, Li, Brown, Lanuti, Gainor, and Christiani]{zhai2022spirometry}
Ting Zhai, Yi~Li, Robert Brown, Michael Lanuti, Justin~F Gainor, and David~C Christiani.
\newblock Spirometry at diagnosis and overall survival in non-small cell lung cancer patients.
\newblock \emph{Cancer Medicine}, 11\penalty0 (24):\penalty0 4796--4805, 2022.

\bibitem[Horvitz and Thompson(1952)]{horvitz1952generalization}
Daniel~G Horvitz and Donovan~J Thompson.
\newblock A generalization of sampling without replacement from a finite universe.
\newblock \emph{Journal of the American Statistical Association}, 47\penalty0 (260):\penalty0 663--685, 1952.

\bibitem[K{\"u}nzel et~al.(2019)K{\"u}nzel, Sekhon, Bickel, and Yu]{kunzel2019metalearners}
S{\"o}ren~R K{\"u}nzel, Jasjeet~S Sekhon, Peter~J Bickel, and Bin Yu.
\newblock Metalearners for estimating heterogeneous treatment effects using machine learning.
\newblock \emph{Proceedings of the National academy of Sciences}, 116\penalty0 (10):\penalty0 4156--4165, 2019.

\bibitem[Nie and Wager(2021)]{nie2021quasi}
Xinkun Nie and Stefan Wager.
\newblock Quasi-oracle estimation of heterogeneous treatment effects.
\newblock \emph{Biometrika}, 108\penalty0 (2):\penalty0 299--319, 2021.

\bibitem[Kennedy(2020)]{kennedy2020towards}
Edward~H Kennedy.
\newblock Towards optimal doubly robust estimation of heterogeneous causal effects.
\newblock \emph{arXiv preprint arXiv:2004.14497}, 2020.

\bibitem[Yang et~al.(2025)Yang, Hu, Liao, and Chen]{yang2025doubly}
Yuhui Yang, Weiwei Hu, Zhenli Liao, and Fangyao Chen.
\newblock Doubly robust estimators for heterogeneous treatment effects in heteroskedastic survival data.
\newblock \emph{Statistics in Medicine}, 44\penalty0 (23-24):\penalty0 e70301, 2025.

\bibitem[Bo et~al.(2025)Bo, Jeong, Forno, and Ding]{bo2025evaluating}
Na~Bo, Jong-Hyeon Jeong, Erick Forno, and Ying Ding.
\newblock Evaluating meta-learners to analyze treatment heterogeneity in survival data: Application to electronic health records of pediatric asthma care in covid-19 pandemic.
\newblock \emph{Statistics in Medicine}, 44\penalty0 (3-4):\penalty0 e10333, 2025.

\bibitem[Cui et~al.(2023)Cui, Kosorok, Sverdrup, Wager, and Zhu]{cui2023estimating}
Yifan Cui, Michael~R Kosorok, Erik Sverdrup, Stefan Wager, and Ruoqing Zhu.
\newblock Estimating heterogeneous treatment effects with right-censored data via causal survival forests.
\newblock \emph{Journal of the Royal Statistical Society Series B: Statistical Methodology}, 85\penalty0 (2):\penalty0 179--211, 2023.

\bibitem[Dandl et~al.(2024)Dandl, Bender, and Hothorn]{dandl2024heterogeneous}
Susanne Dandl, Andreas Bender, and Torsten Hothorn.
\newblock Heterogeneous treatment effect estimation for observational data using model-based forests.
\newblock \emph{Statistical Methods in Medical Research}, 33\penalty0 (3):\penalty0 392--413, 2024.

\bibitem[Hu(2024)]{hu2024new}
Liangyuan Hu.
\newblock A new method for clustered survival data: Estimation of treatment effect heterogeneity and variable selection.
\newblock \emph{Biometrical Journal}, 66\penalty0 (1):\penalty0 2200178, 2024.

\bibitem[Curth et~al.(2021)Curth, Lee, and van~der Schaar]{curth2021survite}
Alicia Curth, Changhee Lee, and Mihaela van~der Schaar.
\newblock Survite: Learning heterogeneous treatment effects from time-to-event data.
\newblock \emph{Advances in Neural Information Processing Systems}, 34:\penalty0 26740--26753, 2021.

\bibitem[Xu et~al.(2023)Xu, Ignatiadis, Sverdrup, Fleming, Wager, and Shah]{xu2023treatment}
Yizhe Xu, Nikolaos Ignatiadis, Erik Sverdrup, Scott Fleming, Stefan Wager, and Nigam Shah.
\newblock Treatment heterogeneity with survival outcomes.
\newblock In \emph{Handbook of matching and weighting adjustments for causal inference}, pages 445--482. Chapman and Hall/CRC, 2023.

\bibitem[Rosenbaum and Rubin(1983)]{rosenbaum1983propensity}
Paul~R Rosenbaum and Donald~B Rubin.
\newblock The central role of the propensity score in observational studies for causal effects.
\newblock \emph{Biometrika}, 70\penalty0 (1):\penalty0 41--55, 1983.

\bibitem[Imbens and Rubin(2015)]{imbens2015causal}
Guido~W Imbens and Donald~B Rubin.
\newblock \emph{Causal Inference for Statistics, Social, and Biomedical Sciences: An Introduction}.
\newblock Cambridge University Press, 2015.

\bibitem[Andersen and Gill(1982)]{andersen1982cox}
Per~Kragh Andersen and Richard~D Gill.
\newblock Cox's regression model for counting processes: A large sample study.
\newblock \emph{Annals of Statistics}, 10\penalty0 (4):\penalty0 1100--1120, 1982.

\bibitem[Tsiatis(2006)]{tsiatis2006semiparametric}
Anastasios~A Tsiatis.
\newblock \emph{Semiparametric Theory and Missing Data}.
\newblock Springer, 2006.

\bibitem[van~der Vaart(1998)]{vanderVaart1998}
Aad~W van~der Vaart.
\newblock \emph{Asymptotic Statistics}.
\newblock Cambridge University Press, 1998.

\bibitem[Chernozhukov et~al.(2018{\natexlab{a}})Chernozhukov, Chetverikov, Demirer, Duflo, Hansen, Newey, and Robins]{chernozhukov2018dml}
Victor Chernozhukov, Denis Chetverikov, Mert Demirer, Esther Duflo, Christian Hansen, Whitney Newey, and James Robins.
\newblock Double/debiased machine learning for treatment and structural parameters.
\newblock \emph{The Econometrics Journal}, 21\penalty0 (1):\penalty0 C1--C68, 2018{\natexlab{a}}.

\bibitem[Tsybakov(2009)]{tsybakov2009introduction}
Alexandre~B. Tsybakov.
\newblock \emph{Introduction to Nonparametric Estimation}.
\newblock Springer Series in Statistics. Springer, 2009.
\newblock \doi{10.1007/b13794}.

\bibitem[Gy{\"o}rfi et~al.(2006)Gy{\"o}rfi, Kohler, Krzy{\.z}ak, and Walk]{gyorfi2006distribution}
L{\'a}szl{\'o} Gy{\"o}rfi, Michael Kohler, Adam Krzy{\.z}ak, and Harro Walk.
\newblock \emph{A Distribution-Free Theory of Nonparametric Regression}.
\newblock Springer, 2006.

\bibitem[Chernozhukov et~al.(2018{\natexlab{b}})Chernozhukov, Chetverikov, Demirer, Duflo, Hansen, Newey, and Robins]{chernozhukov2018double}
Victor Chernozhukov, Denis Chetverikov, Mert Demirer, Esther Duflo, Christian Hansen, Whitney Newey, and James Robins.
\newblock Double/debiased machine learning for treatment and structural parameters.
\newblock \emph{The Econometrics Journal}, 21\penalty0 (1):\penalty0 C1--C68, 2018{\natexlab{b}}.
\newblock \doi{10.1111/ectj.12097}.

\bibitem[Schmidt-Hieber(2020)]{schmidt2020deep}
Johannes Schmidt-Hieber.
\newblock Deep relu network approximation of functions on a manifold.
\newblock \emph{Journal of Machine Learning Research}, 21\penalty0 (52):\penalty0 1--26, 2020.

\bibitem[van~der Vaart and Wellner(1996)]{van1996weak}
Aad~W. van~der Vaart and Jon~A. Wellner.
\newblock \emph{Weak Convergence and Empirical Processes}.
\newblock Springer, 1996.

\bibitem[Bartlett et~al.(2017)Bartlett, Foster, and Telgarsky]{bartlett2017spectrally}
Peter~L. Bartlett, Dylan~J. Foster, and Matus Telgarsky.
\newblock Spectrally-normalized margin bounds for neural networks.
\newblock In \emph{Advances in Neural Information Processing Systems}, volume~30, 2017.

\bibitem[Alvarez et~al.(2026)Alvarez, Sun, Li, and Christiani]{alvarez2025effects}
A.~A.~R. Alvarez, Y.~Sun, Y.~Li, and D.~C. Christiani.
\newblock Effects of sex on mortality in patients with lung cancer: A multiple mediation analysis of the {Boston Lung Cancer Study}.
\newblock \emph{Clinical Lung Cancer}, 27\penalty0 (2):\penalty0 201--209.e3, March 2026.

\bibitem[Wang et~al.(2023)Wang, Romero-Gutierrez, Kothari, Shafer, Li, and Christiani]{wang2023prediagnosis}
Xinan Wang, Christopher~W Romero-Gutierrez, Jui Kothari, Andrea Shafer, Yi~Li, and David~C Christiani.
\newblock Prediagnosis smoking cessation and overall survival among patients with non--small cell lung cancer.
\newblock \emph{JAMA Network Open}, 6\penalty0 (5):\penalty0 e2311966, 2023.

\bibitem[Arriagada et~al.(2010)Arriagada, Dunant, Pignon, Bergman, Chabowski, Grunenwald, Kozlowski, Le~P{\'e}choux, Pirker, Pinel, et~al.]{arriagada2010long}
Rodrigo Arriagada, Ariane Dunant, Jean-Pierre Pignon, Bengt Bergman, Mariusz Chabowski, Dominique Grunenwald, Miroslaw Kozlowski, C{\'e}cile Le~P{\'e}choux, Robert Pirker, Maria-Izabel~Sathler Pinel, et~al.
\newblock Long-term results of the international adjuvant lung cancer trial evaluating adjuvant cisplatin-based chemotherapy in resected lung cancer.
\newblock \emph{Journal of Clinical Oncology}, 28\penalty0 (1):\penalty0 35--42, 2010.

\bibitem[Leiter et~al.(2022)Leiter, Kong, Gould, Kale, Veluswamy, Smith, Mhango, Huang, Wisnivesky, and Sigel]{leiter2022benefits}
Amanda Leiter, Chung~Yin Kong, Michael~K Gould, Minal~S Kale, Rajwanth~R Veluswamy, Cardinale~B Smith, Grace Mhango, Brian~Z Huang, Juan~P Wisnivesky, and Keith Sigel.
\newblock The benefits and harms of adjuvant chemotherapy for non-small cell lung cancer in patients with major comorbidities: A simulation study.
\newblock \emph{Plos one}, 17\penalty0 (11):\penalty0 e0263911, 2022.

\bibitem[Sandler et~al.(2016)Sandler, Wang, Hancock, Boffa, Detterbeck, and Kim]{sandler2016gender}
Britt~J Sandler, Zuoheng Wang, Jacquelyn~G Hancock, Daniel~J Boffa, Frank~C Detterbeck, and Anthony~W Kim.
\newblock Gender, age, and comorbidity status predict improved survival with adjuvant chemotherapy following lobectomy for non-small cell lung cancers larger than 4 cm.
\newblock \emph{Annals of surgical oncology}, 23:\penalty0 638--645, 2016.

\bibitem[Douillard et~al.(2006)Douillard, Rosell, De~Lena, Carpagnano, Ramlau, Gonz{\'a}les-Larriba, Grodzki, Pereira, Le~Groumellec, Lorusso, et~al.]{douillard2006adjuvant}
Jean-Yves Douillard, Rafael Rosell, Mario De~Lena, Francesco Carpagnano, Rodryg Ramlau, Jose~Luis Gonz{\'a}les-Larriba, Tomasz Grodzki, Jose~Rodrigues Pereira, Alain Le~Groumellec, Vito Lorusso, et~al.
\newblock Adjuvant vinorelbine plus cisplatin versus observation in patients with completely resected stage ib--iiia non-small-cell lung cancer (adjuvant navelbine international trialist association [anita]): a randomised controlled trial.
\newblock \emph{The Lancet Oncology}, 7\penalty0 (9):\penalty0 719--727, 2006.

\bibitem[Zhang et~al.(2008)Zhang, Xu, Wang, Li, and Wang]{zhang2008influence}
Zhenfa Zhang, Feng Xu, Shengguang Wang, Ni~Li, and Changli Wang.
\newblock Influence of smoking on histologic type and the efficacy of adjuvant chemotherapy in resected non-small cell lung cancer.
\newblock \emph{Lung Cancer}, 60\penalty0 (3):\penalty0 434--440, 2008.

\bibitem[Sakin et~al.(2021)Sakin, Sahin, Mustafa~Atci, Yasar, Demir, Geredeli, Sakin, and Cihan]{sakin2021effect}
Aysegul Sakin, Suleyman Sahin, Muhammed Mustafa~Atci, Nurgul Yasar, Cumhur Demir, Caglayan Geredeli, Abdullah Sakin, and Sener Cihan.
\newblock The effect of body mass index on treatment outcomes in patients with metastatic non-small cell lung cancer treated with platinum-based therapy.
\newblock \emph{Nutrition and Cancer}, 73\penalty0 (8):\penalty0 1411--1418, 2021.

\end{thebibliography}

\clearpage
\setcounter{page}{1}

\renewcommand{\thetable}{A.\arabic{table}}
\setcounter{table}{0}

\newpage 
\section*{Supplementary Materials for \\ ``estimating heterogeneous treatment effects with survival outcomes via a deep survival learner" }
\label{app1}

\noindent
{\bf Proofs of Theorems}

 \begin{proofof} {Theorem~\ref{thm:ID}:}
It suffices to show that, for each $w \in \{0,1\}$,
\[
\mathbb{E}\{\pi_i^w(t) \mid X_i = x\} = P(T_i^w > t \mid X_i = x),
\]
since this immediately implies
\[
\mathbb{E}\{\varphi_i(t) \mid X_i = x\}
=
P(T_i^1 > t \mid X_i = x) - P(T_i^0 > t \mid X_i = x)
=
\tau_0(x;t).
\]
We prove the result for $w = 1$; the case $w = 0$ follows by symmetry.

Recall that
\[
\pi_i^1(t)
=
S^1(x;t)
+
\frac{\mathbb{I}(W_i = 1)}{e^1(x)}
\left\{
\frac{\mathbb{I}(U_i > t)}{G^1(x;t)}
-
S^1(x;t)
\right\},
\]
where $S^1(x;t) = P(T_i > t \mid X_i = x, W_i = 1)$,
$e^1(x) = P(W_i = 1 \mid X_i = x)$, and
$G^1(x;t) = P(\mathcal C_i > t \mid X_i = x, W_i = 1)$.

Taking conditional expectation given $X_i = x$ yields
\begin{align*}
\mathbb{E}\{\pi_i^1(t) \mid X_i = x\}
&=
S^1(x;t)
+
\mathbb{E}\!\left[
\left.
\frac{\mathbb{I}(W_i = 1)}{e^1(x)}
\left\{
\frac{\mathbb{I}(U_i > t)}{G^1(x;t)}
-
S^1(x;t)
\right\}
\right| X_i = x
\right].
\end{align*}

By the law of iterated expectations, the second term can be written as
\begin{align*}
&\mathbb{E}\!\left[
\left.
\mathbb{E}\!\left\{
\left.
\frac{\mathbb{I}(W_i = 1)}{e^1(x)}
\left(
\frac{\mathbb{I}(U_i > t)}{G^1(x;t)}
-
S^1(x;t)
\right)
\right| X_i = x, W_i
\right\}
\right| X_i = x
\right].
\end{align*}

Conditioning on $W_i$, and noting that $\mathbb{I}(W_i = 1)$ is deterministic given $W_i$, we obtain
\begin{align*}
&=
\mathbb{E}\!\left[
\left.
\mathbb{I}(W_i = 1)
\frac{1}{e^1(x)}
\left\{
\mathbb{E}\!\left[
\left.
\frac{\mathbb{I}(U_i > t)}{G^1(x;t)}
\right| X_i = x, W_i = 1
\right]
-
S^1(x;t)
\right\}
\right| X_i = x
\right].
\end{align*}

Using the definition $U_i = \min(T_i, \mathcal C_i)$, we have
$\mathbb{I}(U_i > t) = \mathbb{I}(T_i > t,\, \mathcal C_i > t)$.
Under Assumption~\ref{assum_indep_censoring},
 $\mathcal C_i \perp T_i^w \mid (X_i, W_i=w)$ for $w\in\{0,1\}$.
Under Assumption~\ref{assum_consistency}, when conditioning on $W_i=w$ we have
$T_i = T_i^w$, which implies $\mathcal C_i \perp T_i \mid (X_i, W_i=w)$, and  in particular,
$T_i \perp \mathcal C_i \mid (X_i, W_i = 1)$. This implies
\[
\mathbb{E}\!\left[
\left.
\mathbb{I}(T_i > t,\, \mathcal C_i > t)
\right| X_i = x, W_i = 1
\right]
=
P(T_i > t \mid X_i = x, W_i = 1)
P(\mathcal C_i > t \mid X_i = x, W_i = 1).
\]
Therefore,
\[
\mathbb{E}\!\left[
\left.
\frac{\mathbb{I}(U_i > t)}{G^1(x;t)}
\right| X_i = x, W_i = 1
\right]
=
\frac{
P(T_i > t \mid X_i = x, W_i = 1)
P(\mathcal C_i > t \mid X_i = x, W_i = 1)
}{
P(\mathcal C_i > t \mid X_i = x, W_i = 1)
}
=
S^1(x;t).
\]

Substituting back yields
\begin{align*}
\mathbb{E}\{\pi_i^1(t) \mid X_i = x\}
&=
S^1(x;t)
+
\mathbb{E}\!\left[
\left.
\mathbb{I}(W_i = 1)
\frac{1}{e^1(x)}
\bigl\{ S^1(x;t) - S^1(x;t) \bigr\}
\right| X_i = x
\right] \\
&=
S^1(x;t).
\end{align*}

Finally, by Assumptions~\ref{assum_consistency} and~\ref{assum_unconfoundedness},
\[
S^1(x;t)
=
P(T_i > t \mid X_i = x, W_i = 1)
=
P(T_i^1 > t \mid X_i = x).
\]
This completes the proof.
\end{proofof}

 \begin{proofof} {Theorem~\ref{thm:DR}:}
It suffices to show that for each $w\in\{0,1\}$,
\[
\mathbb{E}\{\pi_i^w(t)\mid X_i=x\}=P(T_i^w>t\mid X_i=x),
\]
because then
\(
\mathbb{E}\{\varphi_i(t)\mid X_i=x\}
=
P(T_i^1>t\mid X_i=x)-P(T_i^0>t\mid X_i=x)
=
\tau_0(x;t).
\)
We prove the claim for $w=1$; the case $w=0$ follows by symmetry.

Recall the definition
\[
\pi_i^1(t)
=
S^1(X_i;t)
+
\frac{\mathbb{I}(W_i=1)}{e^1(X_i)}
\left\{
\frac{Y_i(t)}{G^1(X_i;t)}-S^1(X_i;t)
\right\},
\]
where $Y_i(t)=\mathbb{I}(U_i>t)=\mathbb{I}(T_i>t,\,\mathcal C_i>t)$, and where the nuisance functions
$S^1(x;t)$, $G^1(x;t)$ and $e^1(x)$ are intended to approximate
\[
S_0^1(x;t)=P(T_i>t\mid X_i=x,W_i=1),\qquad
G_0^1(x;t)=P(\mathcal C_i>t\mid X_i=x,W_i=1),
\]
\[
e_0^1(x)=P(W_i=1\mid X_i=x).
\]
Throughout, Assumptions 1--5 ensure that these conditional probabilities are well-defined and that $G_0^1(x;t)>0$.

\medskip

\noindent \textbf{Case A: the survival model is correctly specified.}
Assume $S^1(x;t)=S_0^1(x;t)$ and the censoring model is correctly specified, i.e., $G^1(x;t)=G_0^1(x;t)$.
Then, regardless of whether the propensity score model $e^1(\cdot)$ is correct,
the argument in Theorem~1 applies verbatim and yields
\[
\mathbb{E}\{\pi_i^1(t)\mid X_i=x\}=S_0^1(x;t)=P(T_i^1>t\mid X_i=x),
\]
where the last equality follows from Assumptions \ref{assum_consistency}  and \ref{assum_unconfoundedness}.
For completeness, we provide the key step:
\[
\mathbb{E}\!\left[\left.\frac{Y_i(t)}{G_0^1(x;t)}\right|X_i=x,W_i=1\right]
=
\frac{P(T_i>t,\mathcal C_i>t\mid X_i=x,W_i=1)}{P(\mathcal C_i>t\mid X_i=x,W_i=1)}
=
P(T_i>t\mid X_i=x,W_i=1),
\]
where the second equality uses Assumption 4 (noninformative censoring), i.e.,
$T_i\perp \mathcal C_i\mid (X_i,W_i=1)$.
This establishes unbiasedness in Case A.

\medskip

\noindent \textbf{Case B: the propensity score model is correctly specified.}
Now assume the censoring model is correctly specified, $G^1(x;t)=G_0^1(x;t)$, and the propensity score is correctly specified,
$e^1(x)=e_0^1(x)=P(W_i=1\mid X_i=x)$, while allowing the survival model $S^1(x;t)$ to be misspecified.
We show that $\mathbb{E}\{\pi_i^1(t)\mid X_i=x\}=S_0^1(x;t)$.

Condition on $X_i=x$. Using the definition of $\pi_i^1(t)$,
\begin{align*}
\mathbb{E}\{\pi_i^1(t)\mid X_i=x\}
&=
S^1(x;t)
+
\mathbb{E}\!\left[
\left.
\frac{\mathbb{I}(W_i=1)}{e_0^1(x)}
\left\{
\frac{Y_i(t)}{G_0^1(x;t)}-S^1(x;t)
\right\}
\right|X_i=x
\right].
\end{align*}
Apply the law of iterated expectations conditioning on $(X_i,W_i)$:
\begin{align*}
&\mathbb{E}\!\left[
\left.
\frac{\mathbb{I}(W_i=1)}{e_0^1(x)}
\left\{
\frac{Y_i(t)}{G_0^1(x;t)}-S^1(x;t)
\right\}
\right|X_i=x
\right]\\
&\qquad=
\mathbb{E}\!\left[
\left.
\mathbb{E}\!\left\{
\left.
\frac{\mathbb{I}(W_i=1)}{e_0^1(x)}
\left(
\frac{Y_i(t)}{G_0^1(x;t)}-S^1(x;t)
\right)
\right|X_i=x,W_i
\right\}
\right|X_i=x
\right].
\end{align*}
Inside the inner expectation, $\mathbb{I}(W_i=1)$ is fixed given $W_i$. Hence
\begin{align*}
&=
\mathbb{E}\!\left[
\left.
\frac{\mathbb{I}(W_i=1)}{e_0^1(x)}
\left\{
\mathbb{E}\!\left[\left.\frac{Y_i(t)}{G_0^1(x;t)}\right|X_i=x,W_i=1\right]
-
S^1(x;t)
\right\}
\right|X_i=x
\right].
\end{align*}
We next compute the conditional mean term. Since $Y_i(t)=\mathbb{I}(T_i>t,\mathcal C_i>t)$ and $G_0^1(x;t)=P(\mathcal C_i>t\mid X_i=x,W_i=1)$,
\begin{align*}
\mathbb{E}\!\left[\left.\frac{Y_i(t)}{G_0^1(x;t)}\right|X_i=x,W_i=1\right]
&=
\frac{
\mathbb{E}\!\left[\mathbb{I}(T_i>t,\mathcal C_i>t)\mid X_i=x,W_i=1\right]
}{
P(\mathcal C_i>t\mid X_i=x,W_i=1)
}.
\end{align*}
By Assumption \ref{assum_indep_censoring}   and a similar argument in the Proof of Theorem 1,
$T_i\perp \mathcal C_i\mid (X_i,W_i=1)$, so
\[
P(T_i>t,\mathcal C_i>t\mid X_i=x,W_i=1)
=
P(T_i>t\mid X_i=x,W_i=1)\,P(\mathcal C_i>t\mid X_i=x,W_i=1).
\]
Therefore,
\begin{align*}
\mathbb{E}\!\left[\left.\frac{Y_i(t)}{G_0^1(x;t)}\right|X_i=x,W_i=1\right]
&=
P(T_i>t\mid X_i=x,W_i=1)
=
S_0^1(x;t).
\end{align*}
Substituting this back,
\begin{align*}
\mathbb{E}\{\pi_i^1(t)\mid X_i=x\}
&=
S^1(x;t)
+
\mathbb{E}\!\left[
\left.
\frac{\mathbb{I}(W_i=1)}{e_0^1(x)}
\left\{
S_0^1(x;t)-S^1(x;t)
\right\}
\right|X_i=x
\right]\\
&=
S^1(x;t)
+
\left\{
S_0^1(x;t)-S^1(x;t)
\right\}
\mathbb{E}\!\left[\left.\frac{\mathbb{I}(W_i=1)}{e_0^1(x)}\right|X_i=x\right].
\end{align*}
Because $e_0^1(x)=P(W_i=1\mid X_i=x)$, we have
\[
\mathbb{E}\!\left[\left.\frac{\mathbb{I}(W_i=1)}{e_0^1(x)}\right|X_i=x\right]
=
\frac{\mathbb{E}\{\mathbb{I}(W_i=1)\mid X_i=x\}}{e_0^1(x)}
=
\frac{P(W_i=1\mid X_i=x)}{P(W_i=1\mid X_i=x)}
=
1.
\]
This identity follows from correct specification of the propensity score
and Assumption~\ref{assum_overlap}, which guarantees $0<e_0^1(x)<1$.
Hence,
\[
\mathbb{E}\{\pi_i^1(t)\mid X_i=x\}=S_0^1(x;t)=P(T_i>t\mid X_i=x,W_i=1).
\]
Finally, by Assumptions \ref{assum_consistency} and \ref{assum_unconfoundedness}, 
\[
P(T_i>t\mid X_i=x,W_i=1)=P(T_i^1>t\mid X_i=x),
\]
so $\mathbb{E}\{\pi_i^1(t)\mid X_i=x\}=P(T_i^1>t\mid X_i=x)$, completing the proof for $w=1$.

\medskip

\noindent The proof for $w=0$ is identical after replacing $1$ by $0$ throughout.
\end{proofof}
\begin{proofof} {Theorem~\ref{thm:bias_full}:}
    We first analyze the treatment-specific component \(\hat\pi_i^w(t)\) for fixed \(w\in\{0,1\}\).  As defined in the notation section,
    conditioning on \(X_i=x\), and using cross-fitting so that \(\hat S^w(x;t)\), \(\hat e^w(x)\), and \(\hat G^w(x;t)\) are fixed, we take expectation with respect to  \(T_i, \mathcal C_i\):
\begin{align*}
\mathbb E\{\hat\pi_i^w(t)\mid X_i=x\}
&=
\hat S^w(x;t)
+
\frac{1}{\hat e^w(x)}
\mathbb E\!\left[
\left.
\mathbb I(W_i=w)
\left\{
\frac{Y_i(t)}{\hat G^w(x;t)}
-
\hat S^w(x;t)
\right\}
\right|X_i=x
\right].
\end{align*}
By iterated expectation,
\begin{align*}
&\mathbb E\!\left[
\left.
\mathbb I(W_i=w)
\left\{
\frac{Y_i(t)}{\hat G^w(x;t)}
-
\hat S^w(x;t)
\right\}
\right|X_i=x
\right]\\
&\qquad=
e_0^w(x)
\left\{
\mathbb E\!\left[
\left.\frac{Y_i(t)}{\hat G^w(x;t)}\right|X_i=x,W_i=w
\right]
-
\hat S^w(x;t)
\right\}.
\end{align*}
Since \(Y_i(t)=\mathbb I(T_i>t,\mathcal C_i>t)\), by noninformative censoring,
\begin{align*}
\mathbb E\!\left[
\left.\frac{Y_i(t)}{\hat G^w(x;t)}\right|X_i=x,W_i=w
\right]
&=
\frac{P(T_i>t,\mathcal C_i>t\mid X_i=x,W_i=w)}{\hat G^w(x;t)}\\
&=
\frac{P(T_i>t\mid X_i=x,W_i=w)\,P(\mathcal C_i>t\mid X_i=x,W_i=w)}{\hat G^w(x;t)}\\
&=
\frac{S_0^w(x;t)\,G_0^w(x;t)}{\hat G^w(x;t)}.
\end{align*}
Hence,
\[
\mathbb E\{\hat\pi_i^w(t)\mid X_i=x\}
=
\hat S^w(x;t)
+
\frac{e_0^w(x)}{\hat e^w(x)}
\left\{
\frac{S_0^w(x;t)\,G_0^w(x;t)}{\hat G^w(x;t)}
-
\hat S^w(x;t)
\right\}.
\]
Subtracting \(S_0^w(x;t)\) from both sides gives
\begin{align*}
&\mathbb E\{\hat\pi_i^w(t)\mid X_i=x\} - S_0^w(x;t)\\
&\qquad=
\hat S^w(x;t)-S_0^w(x;t)
+
\frac{e_0^w(x)}{\hat e^w(x)}
\left\{
\frac{S_0^w(x;t)\,G_0^w(x;t)}{\hat G^w(x;t)}
-
\hat S^w(x;t)
\right\}.
\end{align*}
Add and subtract \(\frac{e_0^w(x)}{\hat e^w(x)}S_0^w(x;t)\) to obtain
\begin{align*}
&\mathbb E\{\hat\pi_i^w(t)\mid X_i=x\} - S_0^w(x;t)\\
&\qquad=
\left\{1-\frac{e_0^w(x)}{\hat e^w(x)}\right\}
\big\{\hat S^w(x;t)-S_0^w(x;t)\big\}
+
\frac{e_0^w(x)}{\hat e^w(x)}
\left\{
\frac{S_0^w(x;t)\,G_0^w(x;t)}{\hat G^w(x;t)}
-
S_0^w(x;t)
\right\}\\
&\qquad=
\frac{\hat e^w(x)-e_0^w(x)}{\hat e^w(x)}
\big\{\hat S^w(x;t)-S_0^w(x;t)\big\}
+
\frac{e_0^w(x)}{\hat e^w(x)}
\frac{S_0^w(x;t)}{\hat G^w(x;t)}
\big\{G_0^w(x;t)-\hat G^w(x;t)\big\}.
\end{align*}

Now use
\[
\hat\varphi_i(t)=\hat\pi_i^1(t)-\hat\pi_i^0(t),
\qquad
\varphi_i^0(t)=\pi_{i,0}^1(t)-\pi_{i,0}^0(t),
\]
together with
\[
\mathbb E\{\pi_{i,0}^w(t)\mid X_i=x\}=S_0^w(x;t),
\qquad w=0,1,
\]
to obtain
\begin{align*}
\mathbb E\big[\hat\varphi_i(t)-\varphi_i^0(t)\mid X_i=x\big]
&=
\Big(\mathbb E\{\hat\pi_i^1(t)\mid X_i=x\}-S_0^1(x;t)\Big)\\
&\qquad-
\Big(\mathbb E\{\hat\pi_i^0(t)\mid X_i=x\}-S_0^0(x;t)\Big),
\end{align*}
which yields \eqref{eq:full_bias_decomp}.

It remains to bound the right-hand side uniformly in \(x\). By positivity and uniform consistency of \(\hat e^w\) and \(\hat G^w\) (Assumption \ref{assum_uni_consistency}) , with probability tending to one,
\[
\inf_x \hat e^w(x)\ge c/2,
\qquad
\inf_x \hat G^w(x;t)\ge c/2.
\]
Also \(0\le S_0^w(x;t)\le 1\). Therefore,
\begin{align*}
&\sup_x
\left|
\frac{\hat e^w(x)-e_0^w(x)}{\hat e^w(x)}
\big\{\hat S^w(x;t)-S_0^w(x;t)\big\}
\right|
=
O_p\!\left(r_n^{(e)}r_n^{(S)}\right),
\\
&\sup_x
\left|
\frac{e_0^w(x)}{\hat e^w(x)}
\frac{S_0^w(x;t)}{\hat G^w(x;t)}
\big\{G_0^w(x;t)-\hat G^w(x;t)\big\}
\right|
=
O_p\!\left(r_n^{(G)}\right).
\end{align*}
Summing over \(w=0,1\) yields
\[
\sup_x
\left|
\mathbb E\big[\hat\varphi_i(t)-\varphi_i^0(t)\mid X_i=x\big]
\right|
=
O_p\!\left(r_n^{(e)}r_n^{(S)}+r_n^{(G)}\right).
\]
If \(\hat G^w=G_0^w\), the second term vanishes and the bound reduces to
\[
O_p\!\left(r_n^{(e)}r_n^{(S)}\right).
\]
This completes the proof.
\end{proofof}

\begin{proofof} {Theorem~\ref{thm:consistency_tailored}:}
    For notational simplicity, write
\[
\hat\varphi_i=\hat\varphi_i(t), \qquad \varphi_i^0=\varphi_i^0(t), \qquad \tau_0(x)=\tau_0(x;t).
\]
Define the conditional mean of the estimated pseudo-outcome by
\[
m_n(x):=\mathbb E[\hat\varphi_i\mid X_i=x].
\]
Then
\[
\hat\tau(x)-\tau_0(x)
=
\{\hat\tau(x)-m_n(x)\}+\{m_n(x)-\tau_0(x)\},
\]
and therefore
\begin{equation}\label{eq:l2_decomp}
\|\hat\tau-\tau_0\|_{L_2(P_X)}
\le
\|\hat\tau-m_n\|_{L_2(P_X)}
+
\|m_n-\tau_0\|_{L_2(P_X)}.
\end{equation}
We study the two terms separately.

By definition of \(\tau_0\),
\[
\tau_0(x)=\mathbb E[\varphi_i^0\mid X_i=x].
\]
Hence
\[
m_n(x)-\tau_0(x)
=
\mathbb E[\hat\varphi_i-\varphi_i^0\mid X_i=x].
\]
By Theorem~\ref{thm:bias_full},
\[
\sup_x |m_n(x)-\tau_0(x)|
=
\sup_x \left|\mathbb E[\hat\varphi_i-\varphi_i^0\mid X_i=x]\right|
=
O_p\!\big(r_n^{(e)}r_n^{(S)}+r_n^{(G)}\big).
\]
Since \(r_n^{(e)}r_n^{(S)}+r_n^{(G)}\to 0\),
\[
\sup_x |m_n(x)-\tau_0(x)|=o_p(1).
\]
Therefore,
\begin{equation}\label{eq:bias_to_zero}
\|m_n-\tau_0\|_{L_2(P_X)}
\le
\sup_x |m_n(x)-\tau_0(x)|
=
o_p(1).
\end{equation}

For any measurable function \(f\),
\[
R_n(f):=\mathbb E\big[(\hat\varphi_i-f(X_i))^2\big] <\infty
\]
by Assumption \ref{assum_moment}

Expanding the square conditional on \(X_i\),
\begin{align*}
R_n(f)
&=
\mathbb E\Big(
\mathbb E\big[(\hat\varphi_i-f(X_i))^2\mid X_i\big]
\Big)\\
&=
\mathbb E\Big(
\Var(\hat\varphi_i\mid X_i)
+
\{m_n(X_i)-f(X_i)\}^2
\Big).
\end{align*}
Thus,
\[
R_n(f)=\mathbb E[\Var(\hat\varphi_i\mid X_i)] + \|f-m_n\|_{L_2(P_X)}^2.
\]
Therefore \(m_n\) is the unique population minimizer of \(R_n(f)\) over all square-integrable functions, and over \(\mathcal F_n\) the excess risk satisfies
\begin{equation}\label{eq:excessriskidentity}
R_n(f)-R_n(m_n)=\|f-m_n\|_{L_2(P_X)}^2.
\end{equation}

Let
\[
f_n^*\in\arg\min_{f\in\mathcal F_n} R_n(f)
\]
be a population risk minimizer over \(\mathcal F_n\). Since \(\hat\tau\) minimizes the empirical risk,
\[
\mathbb P_n[(\hat\varphi_i-\hat\tau(X_i))^2]
\le
\mathbb P_n[(\hat\varphi_i-f_n^*(X_i))^2].
\]
Subtract \(R_n(\hat\tau)\) from the left-hand side and \(R_n(f_n^*)\) from the right-hand side:
\begin{align*}
R_n(\hat\tau)-R_n(f_n^*)
&\le
\Big\{R_n(\hat\tau)-\mathbb P_n[(\hat\varphi_i-\hat\tau(X_i))^2]\Big\}\\
&\quad+
\Big\{\mathbb P_n[(\hat\varphi_i-f_n^*(X_i))^2]-R_n(f_n^*)\Big\}.
\end{align*}
Hence
\[
R_n(\hat\tau)-R_n(f_n^*)
\le
2\sup_{f\in\mathcal F_n}
\left|
\mathbb P_n[(\hat\varphi_i-f(X_i))^2]-R_n(f)
\right|
=
o_p(1)
\]
by Assumption \ref{assum_emp}. Using \eqref{eq:excessriskidentity},
\begin{equation}\label{eq:tau_to_fnstar}
\|\hat\tau-m_n\|_{L_2(P_X)}^2
\le
\|f_n^*-m_n\|_{L_2(P_X)}^2 + o_p(1).
\end{equation}

We now relate \(m_n\) to \(\tau_0\). For any \(f\in\mathcal F_n\),
\[
\|f-m_n\|_{L_2(P_X)}
\le
\|f-\tau_0\|_{L_2(P_X)}+\|\tau_0-m_n\|_{L_2(P_X)}.
\]
Taking infima over \(f\in\mathcal F_n\) and using Assumption \ref{assum_capacity} together with \eqref{eq:bias_to_zero}, we obtain
\[
\inf_{f\in\mathcal F_n}\|f-m_n\|_{L_2(P_X)}
\le
\inf_{f\in\mathcal F_n}\|f-\tau_0\|_{L_2(P_X)}
+
\|\tau_0-m_n\|_{L_2(P_X)}
= \]
\[
\inf_{f\in\mathcal F_n}\|f-\tau_0\|_{L_\infty(P_X)}+o_p(1)
=
o_p(1).
\]
Since \(f_n^*\) minimizes \(R_n(f)\) over \(\mathcal F_n\), equivalently it minimizes \(\|f-m_n\|_{L_2(P_X)}^2\) over \(\mathcal F_n\). Therefore,
\begin{equation}\label{eq:fnstar_to_mn}
\|f_n^*-m_n\|_{L_2(P_X)}=o_p(1).
\end{equation}

Combining \eqref{eq:tau_to_fnstar} and \eqref{eq:fnstar_to_mn} yields
\[
\|\hat\tau-m_n\|_{L_2(P_X)}=o_p(1).
\]

Finally, substitute this and \eqref{eq:bias_to_zero} into \eqref{eq:l2_decomp}:
\[
\|\hat\tau-\tau_0\|_{L_2(P_X)}
\le
\|\hat\tau-m_n\|_{L_2(P_X)}
+
\|m_n-\tau_0\|_{L_2(P_X)}
=
o_p(1).
\]
Hence
\[
\|\hat\tau(\cdot;t)-\tau_0(\cdot;t)\|_{L_2(P_X)}\xrightarrow{p}0.
\]
This completes the proof.
\end{proofof}

\begin{proofof} {Theorem~\ref{thm:overall_rate_holder}:}
Write
\[
\hat\varphi_i=\hat\varphi_i(t),\qquad
\varphi_i^0=\varphi_i^0(t),\qquad
\tau_0(x)=\tau_0(x;t),
\]
and define
\[
m_n(x):=\mathbb E(\hat\varphi_i\mid X_i=x).
\]
Because $\hat\tau(\cdot;t)$ is obtained by regressing the cross-fitted pseudo-outcome on $X_i$, the relevant population regression target is $m_n$, not $\tau_0$ directly. We therefore decompose
\begin{equation}\label{eq:basic_decomp}
\|\hat\tau-\tau_0\|_{L_2(P_X)}
\le
\|\hat\tau-m_n\|_{L_2(P_X)}
+
\|m_n-\tau_0\|_{L_2(P_X)}.
\end{equation}

We first control the second term using the explicit construction of the pseudo-outcome. By definition,
\[
m_n(x)-\tau_0(x)
=
\mathbb E\{\hat\varphi_i-\varphi_i^0\mid X_i=x\},
\]
since
\[
\tau_0(x)=\mathbb E\{\varphi_i^0\mid X_i=x\}.
\]
By Theorem~\ref{thm:bias_full},
\[
\sup_x |m_n(x)-\tau_0(x)|
=
O_p\!\big(r_n^{(e)}r_n^{(S)}+r_n^{(G)}\big).
\]
Hence
\begin{equation}\label{eq:bias_bound}
\|m_n-\tau_0\|_{L_2(P_X)}
\le
\sup_x |m_n-\tau_0|
=
O_p\!\big(r_n^{(e)}r_n^{(S)}+r_n^{(G)}\big),
\end{equation}
which accommodates censoring estimation error with  $r_n^{(G)}$.

Next, consider the population risk associated with the constructed pseudo-outcome,
\[
R_n(f):=\mathbb E\big[(\hat\varphi_i-f(X_i))^2\big].
\]
Since $m_n(x)=\mathbb E(\hat\varphi_i\mid X_i=x)$, conditioning on $X_i$ gives
\[
R_n(f)
=
R_n(m_n)+\|f-m_n\|_{L_2(P_X)}^2.
\]
Thus $m_n$ is the population least-squares target for the regression step, and
\begin{equation}\label{eq:excess_risk}
R_n(f)-R_n(m_n)=\|f-m_n\|_{L_2(P_X)}^2.
\end{equation}

We now use the fact that $\tau_0$ is H\"older smooth. By Assumption \ref{assum_capacity}, there exists
$f_n^\star\in\mathcal F_n$ such that
\[
\|f_n^\star-\tau_0\|_{L_\infty(P_X)}
\le C n^{-\beta/(2\beta+d)}.
\]
Therefore,
\[
\|f_n^\star-m_n\|_{L_2(P_X)}
\le
\|f_n^\star-\tau_0\|_{L_2(P_X)}+\|\tau_0-m_n\|_{L_2(P_X)}
=
O_p\!\left(
n^{-\beta/(2\beta+d)}
+
r_n^{(e)}r_n^{(S)}
+
r_n^{(G)}
\right),
\]
where we used \eqref{eq:bias_bound} and the density bound in Assumption \ref{assum_overlap}. Consequently,
\begin{equation}\label{eq:approx_mn}
R_n(f_n^\star)-R_n(m_n)
=
\|f_n^\star-m_n\|_{L_2(P_X)}^2
=
O_p\!\left(
n^{-2\beta/(2\beta+d)}
+
\{r_n^{(e)}r_n^{(S)}+r_n^{(G)}\}^2
\right).
\end{equation}

Because $\hat\tau$ is an approximate empirical minimizer, it follows that
\[
\mathbb P_n\big[(\hat\varphi_i-\hat\tau(X_i))^2\big]
\le
\mathbb P_n\big[(\hat\varphi_i-f_n^\star(X_i))^2\big]+\delta_n^2,
\]
by Assumption \ref{assum_opt}.
Adding and subtracting population risks yields
\begin{align*}
R_n(\hat\tau)-R_n(f_n^\star)
&\le
\left\{R_n(\hat\tau)-\mathbb P_n[(\hat\varphi_i-\hat\tau(X_i))^2]\right\}\\
&\quad+
\left\{\mathbb P_n[(\hat\varphi_i-f_n^\star(X_i))^2]-R_n(f_n^\star)\right\}
+\delta_n^2\\
&\le
2\sup_{f\in\mathcal F_n}
\left|
\mathbb P_n\big[(\hat\varphi_i-f(X_i))^2\big]-R_n(f)
\right|
+\delta_n^2\\
&=
O_p(\rho_n)+\delta_n^2,
\end{align*}
where the last equality is due to Assumption \ref{assum_emp}.
Combining this with \eqref{eq:approx_mn}, we obtain
\[
R_n(\hat\tau)-R_n(m_n)
=
O_p\!\left(
n^{-2\beta/(2\beta+d)}
+
\rho_n
+
\delta_n^2
+
\{r_n^{(e)}r_n^{(S)}+r_n^{(G)}\}^2
\right).
\]
Applying \eqref{eq:excess_risk},
\[
\|\hat\tau-m_n\|_{L_2(P_X)}^2
=
O_p\!\left(
n^{-2\beta/(2\beta+d)}
+
\rho_n
+
\delta_n^2
+
\{r_n^{(e)}r_n^{(S)}+r_n^{(G)}\}^2
\right),
\]
and hence
\begin{equation}\label{eq:reg_bound}
\|\hat\tau-m_n\|_{L_2(P_X)}
=
O_p\!\left(
n^{-\beta/(2\beta+d)}
+
\rho_n^{1/2}
+
\delta_n
+
r_n^{(e)}r_n^{(S)}
+
r_n^{(G)}
\right).
\end{equation}

Finally, combine \eqref{eq:basic_decomp}, \eqref{eq:bias_bound}, and \eqref{eq:reg_bound} to conclude
\[
\|\hat\tau-\tau_0\|_{L_2(P_X)}
=
O_p\!\left(
n^{-\beta/(2\beta+d)}
+
\rho_n^{1/2}
+
\delta_n
+
r_n^{(e)}r_n^{(S)}
+
r_n^{(G)}
\right).
\]
This completes the proof.
\end{proofof}

\begin{proof}{of Lemma  \ref{lem:time_basis_holder}}
We give a constructive proof based on a local polynomial approximation in the time
direction. We aim to show that the coefficient functions in this expansion inherit
the \(\beta_x\)-H\"older smoothness of \(\tau_0(x,t)\) in the covariate direction.

Write
\[
\beta_t=q_t+\alpha_t,\qquad q_t=\lfloor \beta_t\rfloor,\qquad \alpha_t\in(0,1].
\]
Fix an integer \(K\ge 1\). Partition \([0,1]\) into \(K\) equal subintervals
\[
I_m=\Big[\frac{m-1}{K},\frac{m}{K}\Big),\qquad m=1,\ldots,K-1,
\qquad
I_K=\Big[\frac{K-1}{K},1\Big].
\]
Let \(u_m\) denote the midpoint of \(I_m\). For each interval \(I_m\), we approximate
\(\tau_0(x,t)\) in the time variable \(t\) by its Taylor polynomial of order \(q_t\)
around \(u_m\).

For \(m=1,\ldots,K\) and \(\ell=0,\ldots,q_t\), define
\[
\psi_{m,\ell}(t):=(t-u_m)^\ell \mathbf 1\{t\in I_m\},
\]
and define the corresponding coefficient functions
\[
b_{m,\ell}(x):=\frac{\partial_t^\ell \tau_0(x,u_m)}{\ell!},\qquad x\in\mathcal X.
\]
Then, for \(t\in I_m\),
\[
\sum_{\ell=0}^{q_t} b_{m,\ell}(x)\psi_{m,\ell}(t)
=
\sum_{\ell=0}^{q_t}\frac{\partial_t^\ell \tau_0(x,u_m)}{\ell!}(t-u_m)^\ell,
\]
which is the Taylor polynomial of \(\tau_0(x,\cdot)\) at \(u_m\).

We now verify that, uniformly over \(m\) and \(\ell\), the function
\(b_{m,\ell}\) belongs to a \(\beta_x\)-H\"older ball on \(\mathcal X\).

Write
\[
\beta_x=q_x+\alpha_x,\qquad q_x=\lfloor \beta_x\rfloor,\qquad \alpha_x\in(0,1].
\]
Fix \(m\in\{1,\ldots,K\}\) and \(\ell\in\{0,\ldots,q_t\}\). By definition,
\[
b_{m,\ell}(x)=\frac{\partial_t^\ell \tau_0(x,u_m)}{\ell!}.
\]
For any multi-index \(\nu\) with \(|\nu|\le q_x\),
\[
D_x^\nu b_{m,\ell}(x)
=
\frac{D_x^\nu \partial_t^\ell \tau_0(x,u_m)}{\ell!}.
\]
Because \(\tau_0\in \mathcal H^{\beta_x,\beta_t}(M)\), Definition~\ref{def:holder_xt}
implies that all such derivatives exist and are uniformly bounded:
\[
\sup_{x\in\mathcal X}|D_x^\nu b_{m,\ell}(x)|
\le
\frac{1}{\ell!}\sup_{x\in\mathcal X}|D_x^\nu\partial_t^\ell\tau_0(x,u_m)|
\le
\frac{M}{\ell!}
\le M.
\]
Next, for any multi-index \(\nu\) with \(|\nu|=q_x\), the H\"older condition in the
\(x\)-direction yields
\[
\frac{|D_x^\nu b_{m,\ell}(x)-D_x^\nu b_{m,\ell}(y)|}{\|x-y\|^{\alpha_x}}
=
\frac{|D_x^\nu \partial_t^\ell \tau_0(x,u_m)-D_x^\nu \partial_t^\ell \tau_0(y,u_m)|}
{\ell!\,\|x-y\|^{\alpha_x}}
\le
\frac{M}{\ell!}
\le M.
\]
Thus each \(b_{m,\ell}\) lies in a \(\beta_x\)-H\"older class on \(\mathcal X\),
with H\"older radius bounded by a constant depending only on \(M,\beta_x,\beta_t\).
In particular, there exists a constant \(C_M>0\), depending only on
\(M,\beta_x,\beta_t\), such that
\[
b_{m,\ell}\in \mathcal H^{\beta_x}(C_M)
\qquad\text{for all } m=1,\ldots,K,\ \ell=0,\ldots,q_t.
\]

Fix \(j\in\{1,\ldots,J\}\), and let \(m(j)\) denote the unique index such that
\(t_j\in I_{m(j)}\). Define the local polynomial approximation at the grid point
\(t_j\) by
\[
P_j(x):=\sum_{\ell=0}^{q_t} b_{m(j),\ell}(x)\psi_{m(j),\ell}(t_j)
=
\sum_{\ell=0}^{q_t}
\frac{\partial_t^\ell \tau_0(x,u_{m(j)})}{\ell!}(t_j-u_{m(j)})^\ell.
\]
We claim that
\[
\sup_{x\in\mathcal X} |\tau_0(x,t_j)-P_j(x)| \le C K^{-\beta_t},
\]
uniformly in \(j\).

To see this, fix \(x\in\mathcal X\). By Taylor's theorem in the one-dimensional
variable \(t\), applied to the function \(t\mapsto \tau_0(x,t)\) around the point
\(u_{m(j)}\), there exists a point \(\xi_{x,j}\) between \(u_{m(j)}\) and \(t_j\) such that
\[
\tau_0(x,t_j)-P_j(x)
=
\frac{\partial_t^{q_t}\tau_0(x,\xi_{x,j})-\partial_t^{q_t}\tau_0(x,u_{m(j)})}{q_t!}
(t_j-u_{m(j)})^{q_t}.
\]
Because \(\tau_0\in \mathcal H^{\beta_x,\beta_t}(M)\), the derivative
\(\partial_t^{q_t}\tau_0(x,\cdot)\) is \(\alpha_t\)-H\"older uniformly in \(x\), so
\[
|\partial_t^{q_t}\tau_0(x,\xi_{x,j})-\partial_t^{q_t}\tau_0(x,u_{m(j)})|
\le
M |\xi_{x,j}-u_{m(j)}|^{\alpha_t}
\le
M |t_j-u_{m(j)}|^{\alpha_t}.
\]
Therefore,
\[
|\tau_0(x,t_j)-P_j(x)|
\le
C |t_j-u_{m(j)}|^{q_t+\alpha_t}
=
C |t_j-u_{m(j)}|^{\beta_t}.
\]
Since both \(t_j\) and \(u_{m(j)}\) lie in the same interval \(I_{m(j)}\), whose length
is \(K^{-1}\), we have
\[
|t_j-u_{m(j)}|\le K^{-1}.
\]
Hence
\[
\sup_{x\in\mathcal X} |\tau_0(x,t_j)-P_j(x)|
\le
C K^{-\beta_t},
\]
uniformly in \(j\). Since \(P_X\) is a probability measure,
\[
\|\tau_0(\cdot,t_j)-P_j\|_{L_2(P_X)}^2
\le
\sup_{x\in\mathcal X} |\tau_0(x,t_j)-P_j(x)|^2
\le
C K^{-2\beta_t}.
\]
Taking the supremum over \(j\) gives
\[
\sup_{1\le j\le J}
\|\tau_0(\cdot,t_j)-P_j\|_{L_2(P_X)}^2
\le
C K^{-2\beta_t}.
\]

So far we have constructed \((q_t+1)K\) basis functions
\(\psi_{m,\ell}\) and the same number of coefficient functions \(b_{m,\ell}\).
Each approximant \(P_j\) can be written as
\[
P_j(x)
=
\sum_{m=1}^K\sum_{\ell=0}^{q_t} b_{m,\ell}(x)\psi_{m,\ell}(t_j).
\]
Relabel the collection
\[
\{\psi_{m,\ell}: 1\le m\le K,\ 0\le \ell\le q_t\}
\]
as
\[
\psi_1,\ldots,\psi_{(q_t+1)K},
\]
and similarly relabel the coefficient functions
\[
\{b_{m,\ell}: 1\le m\le K,\ 0\le \ell\le q_t\}
\]
as
\[
b_1,\ldots,b_{(q_t+1)K}.
\]
Then
\[
P_j(x)=\sum_{m=1}^{(q_t+1)K} b_m(x)\psi_m(t_j),
\]
and each \(b_m\in\mathcal H^{\beta_x}(C_M)\).

Now define
\[
K_n:=(q_t+1)K.
\]
Since \(q_t\) depends only on \(\beta_t\), this is only a constant rescaling of \(K\),
and thus
\[
K^{-2\beta_t}\asymp K_n^{-2\beta_t}.
\]
Absorbing constants into \(C\), we obtain
\[
\sup_{1\le j\le J}
\left\|
\tau_0(\cdot,t_j)-\sum_{m=1}^{K_n} b_m(\cdot)\psi_m(t_j)
\right\|_{L_2(P_X)}^2
\le
C K_n^{-2\beta_t}.
\]

Finally,
\[
\sum_{j=1}^J
\left\|
\tau_0(\cdot,t_j)-\sum_{m=1}^{K_n} b_m(\cdot)\psi_m(t_j)
\right\|_{L_2(P_X)}^2
\le
J\,
\sup_{1\le j\le J}
\left\|
\tau_0(\cdot,t_j)-\sum_{m=1}^{K_n} b_m(\cdot)\psi_m(t_j)
\right\|_{L_2(P_X)}^2
\le
C J K_n^{-2\beta_t}.
\]
This completes the proof.
\end{proof}

\begin{proofof}{Theorem \ref{thm:joint_pseudo}:}
   We first define the joint risk, empirical risk, and stochastic term in the setting of joint estimation. For any \(f=(f_1,\ldots,f_J)\in\mathcal F_n\), define the joint empirical risk
\[
R_{n,J}(f)
:=
\mathbb P_n\!\left[
\sum_{j=1}^J \{\hat\varphi_i(t_j)-f_j(X_i)\}^2
\right],
\]
and the corresponding population risk
\[
R_J(f)
:=
\mathbb E\!\left[
\sum_{j=1}^J \{\hat\varphi_i(t_j)-f_j(X_i)\}^2
\right].
\]
Define also the joint approximation error
\[
A_{n,J}
:=
\inf_{f\in\mathcal F_n}
\sum_{j=1}^J \|f_j-\tau_0(\cdot;t_j)\|_{L_2(P_X)}^2.
\]

We define the joint empirical process fluctuation by
\[
\rho_{n,J}
:=
\sup_{f\in\mathcal F_n}
\left|
R_{n,J}(f)-R_J(f)
\right|.
\]
This is the multi-output analogue of the single-time fluctuation term \(\rho_n\) in the proof of Theorem~5.

We also define the optimization error
\[
\delta_{n,J}^2
:=
R_{n,J}(\hat{\boldsymbol\tau})
-
\inf_{f\in\mathcal F_n} R_{n,J}(f),
\]
where
\[
R_{n,J}(f)
:=
\mathbb P_n\!\left[
\sum_{j=1}^J \{\hat\varphi_i(t_j)-f_j(X_i)\}^2
\right].
\]

We next derive the basic inequality from approximate empirical risk minimization. By definition of $\delta_{n,J}^2$, it holds that 
\[
R_{n,J}(\hat{\boldsymbol\tau})
\le
R_{n,J}(f^\star)+\delta_{n,J}^2, 
\]
for any fixed \(f^\star\in\mathcal F_n\). Hence,
\[
R_J(\hat{\boldsymbol\tau})-R_J(f^\star)
=
\{R_J(\hat{\boldsymbol\tau})-R_{n,J}(\hat{\boldsymbol\tau})\}
+\{R_{n,J}(\hat{\boldsymbol\tau})-R_{n,J}(f^\star)\}
+\{R_{n,J}(f^\star)-R_J(f^\star)\}.
\]
Using the approximate minimization property,
\[
R_{n,J}(\hat{\boldsymbol\tau})-R_{n,J}(f^\star)\le \delta_{n,J}^2,
\]
and by definition of \(\rho_{n,J}\),
\[
|R_J(\hat{\boldsymbol\tau})-R_{n,J}(\hat{\boldsymbol\tau})|
\le \rho_{n,J},
\qquad
|R_{n,J}(f^\star)-R_J(f^\star)|
\le \rho_{n,J}.
\]
Therefore,
\[
R_J(\hat{\boldsymbol\tau})
\le
R_J(f^\star)+2\rho_{n,J}+\delta_{n,J}^2.
\]

We now Relate \(R_J(f)\) to the target \(L_2\) error.
For each \(j\), write
\[
\hat\varphi_i(t_j)=\tau_0(X_i;t_j)+\eta_{ij},
\]
where \(\eta_{ij}\) denotes the pseudo-outcome error. Then
\[
\hat\varphi_i(t_j)-f_j(X_i)
=
\{\tau_0(X_i;t_j)-f_j(X_i)\}+\eta_{ij}.
\]
Expanding the square,
\[
R_J(f)
=
\sum_{j=1}^J \|f_j-\tau_0(\cdot;t_j)\|_{L_2(P_X)}^2
+
2\sum_{j=1}^J
\mathbb E\!\left[\eta_{ij}\{\tau_0(X_i;t_j)-f_j(X_i)\}\right]
+
\sum_{j=1}^J \mathbb E(\eta_{ij}^2).
\]
The last term does not depend on \(f\). By the same argument used in the proof of Theorem~5, the cross term is bounded by
\[
\left|
\sum_{j=1}^J
\mathbb E\!\left[\eta_{ij}\{\tau_0(X_i;t_j)-f_j(X_i)\}\right]
\right|
\le
C\sum_{j=1}^J \|f_j-\tau_0(\cdot;t_j)\|_{L_2(P_X)}^2
+
C J\{r_n^{(e)}r_n^{(S)}+r_n^{(G)}\}^2.
\]
Hence, up to multiplicative constants and additive terms independent of \(f\),
\[
R_J(f)
\asymp
\sum_{j=1}^J \|f_j-\tau_0(\cdot;t_j)\|_{L_2(P_X)}^2
+
J\{r_n^{(e)}r_n^{(S)}+r_n^{(G)}\}^2.
\]
Applying this bound to \(f=\hat{\boldsymbol\tau}\) and \(f=f^\star\) in the inequality  gives
\[
\sum_{j=1}^J
\|\hat\tau(\cdot;t_j)-\tau_0(\cdot;t_j)\|_{L_2(P_X)}^2
\le
C\sum_{j=1}^J
\|f^\star_j-\tau_0(\cdot;t_j)\|_{L_2(P_X)}^2
+
C\rho_{n,J}
+
C\delta_{n,J}^2
+
C J\{r_n^{(e)}r_n^{(S)}+r_n^{(G)}\}^2.
\]
Taking the infimum over \(f^\star\in\mathcal F_n\), we obtain
\[
\sum_{j=1}^J
\|\hat\tau(\cdot;t_j)-\tau_0(\cdot;t_j)\|_{L_2(P_X)}^2
=
O_p\!\left(
A_{n,J}
+
\rho_{n,J}
+
\delta_{n,J}^2
+
J\{r_n^{(e)}r_n^{(S)}+r_n^{(G)}\}^2
\right).
\]

We then bound the approximation term \(A_{n,J}\). By Lemma~\ref{lem:time_basis_holder}, there exist basis functions
\(\psi_1,\ldots,\psi_{K_n}\) on \([0,1]\) and coefficient functions
\(b_1,\ldots,b_{K_n}\in\mathcal H^{\beta_x}(C)\) such that
\[
\sum_{j=1}^J
\left\|
\tau_0(\cdot;t_j)-\sum_{m=1}^{K_n} b_m(\cdot)\psi_m(t_j)
\right\|_{L_2(P_X)}^2
\le
C J K_n^{-2\beta_t}.
\]
Since each \(b_m\) is \(\beta_x\)-H\"older, the standard ReLU approximation theorem gives functions \(\tilde b_m\) satisfying
\[
\|\tilde b_m-b_m\|_{L_2(P_X)}^2
\le
C n^{-2\beta_x/(2\beta_x+d)},
\qquad m=1,\ldots,K_n.
\]
Define
\[
\tilde g_j(x):=\sum_{m=1}^{K_n}\tilde b_m(x)\psi_m(t_j),
\qquad j=1,\ldots,J.
\]
Assume the basis is stable on the grid \(\{t_j\}_{j=1}^J\), in the sense that the Gram matrix
\[
G_n:=\Big(\sum_{j=1}^J \psi_m(t_j)\psi_{m'}(t_j)\Big)_{1\le m,m'\le K_n}
\]
satisfies
\[
\lambda_{\max}(G_n)\le C.
\]
Then, as before,
\[
\sum_{j=1}^J
\left\|
\sum_{m=1}^{K_n}\{\tilde b_m(\cdot)-b_m(\cdot)\}\psi_m(t_j)
\right\|_{L_2(P_X)}^2
\le
C \sum_{m=1}^{K_n}\|\tilde b_m-b_m\|_{L_2(P_X)}^2
\le
C K_n n^{-2\beta_x/(2\beta_x+d)}.
\]
Hence
\[
A_{n,J}
\le
C\Bigl(
K_n n^{-2\beta_x/(2\beta_x+d)}
+
J K_n^{-2\beta_t}
\Bigr).
\]

We then show \(\rho_{n,J}\lesssim K_n\rho_n\),
a key stochastic bound. Let \(\mathcal H_n\) denote the scalar-valued network class used in the single-time analysis of Theorem~5. Consider the basis-induced joint subclass
\[
\mathcal F_{n,K_n}
:=
\left\{
f=(f_1,\ldots,f_J):
f_j(x)=\sum_{m=1}^{K_n} b_m(x)\psi_m(t_j),\quad b_m\in\mathcal H_n
\right\}.
\]
Thus, instead of \(J\) unrelated output functions, the joint class is indexed by only
\(K_n\) scalar coefficient functions \(b_1,\ldots,b_{K_n}\).

For the scalar class \(\mathcal H_n\), let \(\rho_n\) denote the corresponding empirical process fluctuation from Theorem~5. So \(\rho_n\) is the bound obtained for the scalar squared-loss class
\[
\mathcal L_n^{(1)}
=
\left\{
(x,y)\mapsto (y-h(x))^2:\ h\in\mathcal H_n
\right\}.
\]
Now define the joint loss class
\[
\mathcal L_{n,J}
=
\left\{
(x,y_1,\ldots,y_J)\mapsto
\sum_{j=1}^J (y_j-f_j(x))^2:\ f\in\mathcal F_{n,K_n}
\right\}.
\]

We compare the complexity of \(\mathcal L_{n,J}\) with that of the \(K_n\)-fold product class generated by \(\mathcal H_n\). Let
\[
b=(b_1,\ldots,b_{K_n})\in\mathcal H_n^{K_n}.
\]
The map
\[
T:b\mapsto f,
\qquad
f_j(x)=\sum_{m=1}^{K_n} b_m(x)\psi_m(t_j),
\]
is linear. By basis stability, it is Lipschitz from the coefficient space into the joint output space under the aggregated metric. Indeed, for two coefficient collections \(b,\tilde b\),
\[
\sum_{j=1}^J
\left\|
\sum_{m=1}^{K_n}\{b_m-\tilde b_m\}\psi_m(t_j)
\right\|_{L_2(P_X)}^2
\le
C\sum_{m=1}^{K_n}\|b_m-\tilde b_m\|_{L_2(P_X)}^2.
\]
Therefore, if each coordinate class \(\mathcal H_n\) is covered by \(N(\varepsilon,\mathcal H_n)\) balls of radius \(\varepsilon\), then the product class \(\mathcal H_n^{K_n}\) is covered by at most
\[
N(\varepsilon,\mathcal H_n)^{K_n}
\]
product balls, and the Lipschitz map \(T\) transfers this to a cover of \(\mathcal F_{n,K_n}\). Consequently,
\[
\log N(\varepsilon,\mathcal F_{n,K_n},\|\cdot\|_{\mathrm{joint}})
\le
C K_n \log N(c\varepsilon,\mathcal H_n,\|\cdot\|_{L_2(P_X)})
\]
for some constants \(C,c>0\).

Now apply the same symmetrization, contraction, and entropy-integral argument used in Theorem~5. Since the metric entropy of the joint class is at most \(K_n\) times that of the scalar class, the resulting empirical process fluctuation satisfies
\[
\rho_{n,J}\lesssim K_n\rho_n.
\]

Finally, we combine the bounds and optimize \(K_n\).
Substituting the bounds for \(A_{n,J}\) and \(\rho_{n,J}\) into the inequality  yields
\[
\sum_{j=1}^J
\|\hat\tau(\cdot;t_j)-\tau_0(\cdot;t_j)\|_{L_2(P_X)}^2
=
O_p\!\left(
K_n n^{-2\beta_x/(2\beta_x+d)}
+
J K_n^{-2\beta_t}
+
K_n\rho_n
+
\delta_{n,J}^2
+
J\{r_n^{(e)}r_n^{(S)}+r_n^{(G)}\}^2
\right).
\]

Balancing the first two approximation terms,
\[
K_n n^{-2\beta_x/(2\beta_x+d)}
\qquad\text{and}\qquad
J K_n^{-2\beta_t},
\]
gives
\[
K_n
\asymp
\left(
J\,n^{2\beta_x/(2\beta_x+d)}
\right)^{1/(2\beta_t+1)}.
\]
Substituting this choice, we obtain
\[
K_n n^{-2\beta_x/(2\beta_x+d)}
\asymp
J K_n^{-2\beta_t}
\asymp
J^{1/(2\beta_t+1)}
n^{-\,4\beta_t\beta_x/\{(2\beta_t+1)(2\beta_x+d)\}}.
\]
Hence
\begin{align*}
& \sum_{j=1}^J
\|\hat\tau(\cdot;t_j)-\tau_0(\cdot;t_j)\|_{L_2(P_X)}^2
= \\
& O_p\!\left(
J^{1/(2\beta_t+1)}
n^{-\,4\beta_t\beta_x/\{(2\beta_t+1)(2\beta_x+d)\}}
+
\left(
J\,n^{2\beta_x/(2\beta_x+d)}
\right)^{1/(2\beta_t+1)}\rho_n
+
\delta_{n,J}^2
+
J\{r_n^{(e)}r_n^{(S)}+r_n^{(G)}\}^2
\right).
\end{align*}
 This completes the proof.
\end{proofof}

\noindent
{\bf Additional Simulation Results}

{To further investigate the finite-sample behavior observed in Case 1, we conducted an additional set of simulations under a simplified setting with lower-dimensional covariates ($d=10$) and a larger training sample size ($N=2000$). This setting is closer to those  often considered in prior methodological work and therefore provides a useful benchmark for assessing whether the performance patterns in Case 1 were driven by limited sample size and model complexity. The results are reported in Table~\ref{tab_simple_setting}.}

{As shown in Table~\ref{tab_simple_setting}, all methods worked better under this easier setting, and several competing approaches achieve noticeably smaller MSE than DSL. In particular, PSDRL-NFT-BART attains the lowest MSE overall, approximately $0.014$ across all choices of $J$, while X-Learner and causal survival forest also perform strongly, with MSEs around $0.020$ and $0.021$, respectively. DRL-NFT-BART improves substantially as the number of evaluation time points increases, with MSE decreasing from $0.042$ at $J=1$ to $0.022$ at $J=100$. By comparison, DSL yields MSEs of $0.042$, $0.046$, and $0.047$ for $J=1,50,100$, respectively, which are somewhat larger but remain reasonably stable across grid sizes.}

{A notable feature of DSL is its very small bias. Its bias remains close to zero across all three values of $J$, around $0.005$, and is clearly smaller than that of several competitors, including M-Learner, PSDRL-NFT-BART, and causal survival forest, all of which exhibit substantially larger systematic bias. R-Learner and DRL-NFT-BART also have relatively small bias, but their MSE is not uniformly better than that of the strongest competitors. Taken together, these results suggest that in a simpler and more data-rich setting, competing methods can achieve very strong predictive accuracy, sometimes surpassing DSL in MSE, while DSL continues to provide stable estimation with consistently low bias. This supports our interpretation that the larger performance gaps seen in Case 1 may be primarily attributable to the more challenging finite-sample and high-complexity setting.}

\begin{table}[ht]
    \centering
    \caption{Performance comparison under simplified setting}\label{tab_simple_setting}
    \begin{threeparttable}
    \begin{tabular}{lcccc}
         \hline
         & $J=1$ & $J=50$ & $J=100$ \\
         \hline
         \multicolumn{4}{l}{\textbf{MSE}}\\
         \hline
         Deep Survival Learner & 0.042(0.041, 0.043) & 0.046(0.045, 0.048) & 0.047(0.045, 0.048) \\
         M-Learner & 0.049(0.047, 0.051) & 0.049(0.047, 0.051) & 0.049(0.047, 0.051) \\
         R-Learner & 0.049(0.047, 0.051) & 0.051(0.049, 0.053) & 0.048(0.046, 0.050) \\
         X-Learner & 0.020(0.019, 0.021) & 0.020(0.019, 0.021) & 0.020(0.019, 0.021) \\
         DRL-NFT-BART          & 0.042(0.026, 0.057) & 0.025(0.020, 0.030) & 0.022(0.020, 0.024)\\
         PSDRL-NFT-BART        & 0.014(0.013, 0.015) & 0.014(0.014, 0.015) & 0.014(0.014, 0.015)\\
         Causal Survival Forest & 0.021(0.020, 0.022) & 0.021(0.020, 0.022) &  0.021(0.020, 0.022) \\
         \hline
         \multicolumn{4}{l}{\textbf{Bias}} \\
         \hline
         Deep Survival Learner & 0.005(0.001, 0.010) & 0.005(0.000, 0.009) & 0.005(0.000, 0.009) \\
         M-Learner & 0.109(0.105, 0.113) & 0.109(0.105, 0.113) & 0.109(0.105, 0.113) \\
         R-Learner & 0.012(0.006, 0.017) & 0.007(0.000, 0.014) & 0.012(0.006, 0.018) \\
         X-Learner & 0.037(0.031, 0.043) & 0.041(0.036, 0.046) & 0.038(0.032, 0.043) \\
         DRL-NFT-BART          & 0.013(0.003, 0.023) & 0.012(0.005, 0.020) & 0.013(0.007, 0.020)\\
         PSDRL-NFT-BART        & 0.083(0.079, 0.087) & 0.084(0.081, 0.088) & 0.084(0.080, 0.088)\\
         Causal Survival Forest & 0.107(0.103, 0.111) & 0.107(0.102, 0.111) & 0.107(0.102, 0.111) \\
         \hline
    \end{tabular}
    \begin{tablenotes}\footnotesize
        \item \textsuperscript{1} The simulation setting uses a training sample size of $N=2000$ and a testing sample size of $400$, with covariate dimension $d=10$ and approximately 35\% censoring.
        \item \textsuperscript{2} Mean squared error and 95\% Monte Carlo confidence intervals are computed based on 200 simulation replicates.
        \item \textsuperscript{3} Conditional average treatment effects are evaluated at \(J \in \{1, 50, 100\}\) time points, equally spaced between the 20th and 80th percentiles of the observed survival time distribution.
    \end{tablenotes}
    \end{threeparttable}
\end{table}

\end{document}